\title[Streaming Algorithms for Ellipsoidal Approximation of Convex Polytopes]{Streaming Algorithms for Ellipsoidal Approximation of Convex Polytopes}
\definecolor{codegreen}{rgb}{0,0.6,0}
\definecolor{codegray}{rgb}{0.5,0.5,0.5}
\definecolor{codepurple}{rgb}{0.58,0,0.82}
\definecolor{backcolour}{rgb}{0.95,0.95,0.92}
\lstdefinestyle{mystyle}{
    backgroundcolor=\color{backcolour},   
    commentstyle=\color{codegreen},
    keywordstyle=\color{magenta},
    numberstyle=\tiny\color{codegray},
    stringstyle=\color{codepurple},
    basicstyle=\footnotesize,
    breakatwhitespace=false,         
    breaklines=true,                 
    captionpos=b,                    
    keepspaces=true,                 
    numbers=left,                    
    numbersep=5pt,                  
    showspaces=false,                
    showstringspaces=false,
    showtabs=false,                  
    tabsize=2,
    basicstyle=\footnotesize\ttfamily
}
\newcommand{\paren}[1]{\left(#1\right)}
\newtheorem{claim}[theorem]{Claim}
\newtheorem{problem}[theorem]{Problem}
\newcommand{\logv}[1]{\log\paren{#1}}
\newcommand{\diag}[1]{\mathsf{diag}\paren{#1}}
\renewcommand{\vec}[1]{\overrightarrow{#1}}
\newcommand{\vo}{\vec{o}\@ifnextchar{^}{\,}{}}
\numberwithin{equation}{section}
\def\eps{\varepsilon}
\def\epsilon{\varepsilon}
\def\eps{\epsilon}
\def\phi{\varphi}
\def\cal{\mathcal}
\newcommand{\R}{{\mathbb R}}
\newcommand{\conv}[1]{\mathsf{conv}\inparen{#1}}
\let\nfrac=\nicefrac
\newcommand{\abs}[1]{\ensuremath{\left\lvert #1 \right\rvert}}
\newcommand{\norm}[1]{\ensuremath{\left\lVert #1 \right\rVert}}
\newcommand{\fnorm}[1]{\ensuremath{\left\lVert #1 \right\rVert_{F}}}
\newcommand{\ip}[1]{\left\langle #1 \right\rangle}
\newcommand{\suchthat}{{\;\; : \;\;}}
\newcommand{\argmax}{\mathrm{argmax}}
\newcommand{\inparen}[1]{\left(#1\right)}             
\newcommand{\inbraces}[1]{\left\{#1\right\}}           
\newcommand{\insquare}[1]{\left[#1\right]}             
\def\eps{\varepsilon}
\def\epsilon{\varepsilon}
\def\eps{\epsilon}
\def\phi{\varphi}
\def\cal{\mathcal}
\newcommand{\cA}{\mathcal{A}}
\newcommand{\cE}{\mathcal{E}}
\newcommand{\mybox}{}
\newcommand{\sigmamax}[1]{\sigma_{\max, #1}}
\newcommand{\detv}[1]{\mathsf{det}\inparen{#1}}
\newcommand{\magicnumber}[1]{3+14d\log #1 + 8d}
\newcommand{\astar}{J^{*}}
\newcommand{\potfun}[1]{\Phi_{#1}}
\definecolor{ForestGreen}{RGB}{34, 139, 34}
\newcommand{\greenbox}[1]{
    \vspace{0.15cm}
    { \small
    \begin{tcolorbox}[enhanced, colback=ForestGreen!12, parbox=false]
    #1
    \end{tcolorbox}
    }
}
\newcommand{\edits}[1]{\textcolor{red}{#1}}
\definecolor{forest}{RGB}{0,155,85}
\newcommand{\fixout}{\bgroup\markoverwith{\textcolor{forest}{\rule[0.5ex]{2pt}{0.4pt}}}\ULon}
\newcommand{\kappaol}{\kappa^{\mathsf{OL}}}
\begin{document}

\maketitle

\begin{abstract}%
We give efficient deterministic one-pass streaming algorithms for finding an ellipsoidal approximation of a symmetric convex polytope. The algorithms are near-optimal in that their approximation factors differ from that of the optimal offline solution only by a factor sub-logarithmic in the aspect ratio of the polytope.
\end{abstract}

\begin{keywords}%
  Ellipsoid, streaming, sketching, online algorithms%
\end{keywords}

\section{Introduction}

Let $X$ be a centrally symmetric convex body in $\mathbb{R}^d$. We say that an ellipsoid $\cE$ is an \textit{$\alpha$-ellipsoidal approximation for $X$} if $\nfrac{\cE}{\alpha} \subseteq X \subseteq \cE$ (where $\alpha \geq 1$). Calculating ellipsoidal approximations has applications to problems in machine learning and data science, including sampling and volume estimation (see, e.g., \cite{cousins15} and \cite{he21}), obstacle collision detection in robotics (see \cite{boyd97}), differential privacy (see \cite{nikolov13}), and online learning (see \cite{li2019stochastic}). Further, the ellipsoid
$\cE$ provides a very succinct approximate representation of $X$ -- $\cE$ can be stored using only 
$\binom{d+1}{2} = O(d^2)$ floats, while the exact representation of $X$ may be arbitrarily large.

John's theorem \cite{john1948} states that the minimum-volume outer ellipsoid, called \emph{John's Ellipsoid of $X$}, is a $\sqrt{d}$-ellipsoidal approximation when $X$ is symmetric -- that is, when $X = -X$. Similarly, John's theorem implies that the maximum-volume inner ellipsoid also yields a $\sqrt{d}$-ellipsoidal approximation when $X$ is symmetric. Furthermore, the approximation factor of $\sqrt{d}$ given by John's Ellipsoid cannot be improved in the worst case (e.g. for the hypercube or cross-polytope). 
John's result can be made algorithmic: Cohen, Cousins, Lee, and Yang designed an highly efficient algorithm for computing the maximum volume interior ellipsoid when $X$ is a symmetric polytope defined by linear constraints, giving a $\sim \sqrt{d}$ approximation  in the offline setting \cite{pmlr-v99-cohen19a}.

\noindent\textbf{Streaming Algorithm.} A natural follow-up question is whether a similar approximation guarantee exists for convex polytopes given in the \textit{streaming setting}. Specifically, suppose that we are given points or constraints defining a symmetric convex polytope one-at-a-time. Our goal is to design an algorithm that computes an $\alpha$-ellipsoidal approximation to the polytope and uses as little memory as possible. Such an algorithm will be useful in a memory-constrained environment. For instance, consider a streaming data summarization task in which a user wishes to obtain an approximation of a dataset that is too large to fit in memory. By computing a good ellipsoidal approximation, the user can summarize the dataset in only $\binom{d+1}{2}$ floating point numbers, whereas to store all vertices of the polytope $X$ we may need $nd$ floats. To our knowledge, existing solutions (such as that of \cite{pmlr-v99-cohen19a}) require the entire dataset to be stored in memory and therefore cannot be applied to a streaming setting.

\paragraph{Related Works and Applications} The problem of calculating an ellipsoidal approximation to a convex body has been well-studied; see \cite{todd16} for an overview of the area. The recent paper of \cite{pmlr-v99-cohen19a} presents an $\widetilde{O}\inparen{nd^2}$-time algorithm for computing a $\sim \sqrt{d}$-approximation for $X$ when $X$ is specified by symmetric linear constraints. 

The problem of approximating a \textit{non-symmetric} convex hull with an ellipsoid was introduced in \cite{mukhopadhyay2010approximate}.
The authors present a greedy algorithm for this problem and show that its approximation factor is unbounded for every $d \geq 2$. However, they do not provide any upper bounds on the approximation ratio.

Efficiently calculating ellipsoidal approximations has implications to the problem of estimating the volume of a convex body. For instance, it is known (see \cite{cousins15}) that if a convex body $X$ satisfies $B_2^d \subseteq X \subseteq R \cdot B_2^d$, then its volume can be approximated in time $\widetilde{O}\inparen{\max\inbraces{d^2R^2, d^3}}$ using a procedure known as \emph{Gaussian cooling}. Observe that computing an $\alpha$-ellipsoidal approximation yields a linear transformation $T$ that transforms a convex body $X$ into a position such that $B_2^d \subseteq TX \subseteq \alpha \cdot B_2^d$. Hence, an efficient algorithm to compute an $\alpha$-ellipsoidal approximation for small $\alpha$ yields an efficient algorithm for estimating the volume of a convex polytope.

Ellipsoidal approximations have also been used as \emph{exploration bases} in some online optimization problems. For instance, the work of \cite{li2019stochastic} considers a stochastic linear optimization setting with adversarial corruptions. Here, a learner observes noisy evaluations of a linear function with the goal of maximizing this function over a convex constraint set. The algorithm used in \cite{li2019stochastic} uses an ellipsoidal rounding of the constraint set to construct an exploration basis. The learner then uses this exploration basis to sample actions during its exploration phases. The approximation factor of the rounding plays a role in the expected regret of the algorithm. Thus, a good, efficiently computable ellipsoidal approximation can be used as a black box to obtain a more efficient, lower-regret algorithm for this setting.

It is highly desirable to give a streaming algorithm for problems in compute-constrained scenarios. For instance, the work of \cite{boyd97} studies a problem in which a robot must estimate the distance between a collection of obstacles and itself. At a high level, their workflow involves computing an ellipsoidal approximation of the convex hull of the set of obstacles. The robot then calculates the distance between itself and this ellipsoidal approximation. Now, the robot could be operating using a microcontroller or some other device with limited computing capabilities and at the same time, the amount of data may be very large. Therefore, we need a memory-efficient algorithm for this setting.

Finally, a streaming algorithm to compute ellipsoidal approximations would yield an algorithm that can adapt to certain changes in a dataset over time while maintaining a consistent approximation guarantee. For instance, suppose that the robot in the setting of \cite{boyd97} has only a limited sight distance. As the robot moves, it acquires knowledge of new obstacles. A streaming algorithm for ellipsoidal approximations would allow the robot to quickly update its summary of the set of obstacles as it moves.

\subsection{Main Result} 

In this paper, we study the following formalization of the \emph{Ellipsoidal Approximation Problem}, stated below.
\begin{problem}[Ellipsoidal Approximation Problem]
\label{problem:ellipsoidal_approx_general}
Given a symmetric convex body $X$, find an ellipsoid $\cE$ so that $\nfrac{\cE}{\alpha} \subseteq X \subseteq \cE$ for some $\alpha \geq 1$. The goal is to find an approximation with a small value of $\alpha$. We say that $\cE$ is an $\alpha$-approximation to $X$. 
\end{problem}

We study the Ellipsoidal Approximation Problem in the following streaming model. We assume that points $x_1, \dots, x_n \in X$ arrive one-by-one. After the algorithm receives point $x_t$, it must output an ellipsoid $\cE_t$ centered at the origin such that all points $x_1,\dots, x_t$ lie in $\cE_t$. The \emph{approximation factor} of the algorithm is the smallest $\alpha$ such that 
$\nfrac{\cE_n}{\alpha} \subseteq \conv{\inbraces{\pm x_1, \dots, \pm x_n}} \subseteq \cE_n$ for every input sequence. The value of $n$ might not be known to the algorithm beforehand.

\begin{problem}[Formal Problem Statement]
\label{problem:main_problem}
We observe a stream of points $x_1, \dots, x_n$ where points arrive one-by-one and $n$ might not be known beforehand. Upon receiving point $x_t$, find an ellipsoid $\cE_t$ such that:
\begin{itemize}
    \item For all $t \in \{1,\dots,n\}$, we have $\conv{\inbraces{\pm x_1, \dots, \pm x_t}} \subseteq \cE_t$;
    \item At the end of the stream, we have $\nfrac{\cE_n}{\alpha} \subseteq \conv{\inbraces{\pm x_1, \dots, \pm x_n}} \subseteq \cE_n$.
\end{itemize}
\end{problem}

The primary results of our work are algorithms presented in Theorem~\ref{thm:main_result_informal}.

\begin{theorem}[Main Result]
\label{thm:main_result_informal}
Let $X =\conv{\inbraces{\pm x_1, \dots, \pm x_n}}$. Assume that $X$ contains a ball of radius $r$ and is contained in a ball of radius $R$.
\begin{enumerate}
    \item There is a streaming algorithm that given $r$ and a stream of points $x_1,\dots,x_n$ provides a solution for Problem~\ref{problem:main_problem} with $\alpha = O(\sqrt{d\logv{\nfrac{R}{r}+1}})$.
    \item There is a streaming algorithm that given $\nfrac{R}{r}$ (but not $r$ and $R$) and a stream of points $x_1,\dots,x_n$ provides a solution for Problem~\ref{problem:main_problem} with $\alpha = O(\sqrt{d\logv{\nfrac{R}{r}+1}})$.
\end{enumerate}
The algorithms run in time $\widetilde{O}\inparen{nd^2}$ and store $O(d^2)$ floating-point numbers.
\end{theorem}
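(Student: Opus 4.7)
The plan is a one-pass algorithm that maintains a positive-definite shape matrix $A_t$ and outputs the ellipsoid $\cE_t = \{x\in\R^d : x^\top A_t^{-1}x \leq 1\}$. For part~1, initialize $A_0 := r^2 I$ so that $\cE_0 = rB_2^d$. When $x_t$ arrives, compute $s_t := x_t^\top A_{t-1}^{-1}x_t$: if $s_t \leq 1$ leave $A_t := A_{t-1}$; otherwise perform the rank-one update $A_t := A_{t-1} + x_t x_t^\top$ and refresh $A_t^{-1}$ in $O(d^2)$ time via the Sherman--Morrison formula. Only $A_t^{-1}$ is stored (an $O(d^2)$-float state), so the total running time is $\widetilde{O}(nd^2)$.

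Containment follows from monotonicity: $A_t \succeq A_{t-1}$, so $A_t^{-1} \preceq A_{t-1}^{-1}$ and by induction every previously-accepted point remains in $\cE_t$; for the freshly inserted $x_t$, Sherman--Morrison gives $x_t^\top A_t^{-1}x_t = s_t/(1+s_t) < 1$. For the approximation factor, let $S\subseteq[n]$ denote the set of update times, so that $A_n = r^2 I + \sum_{t\in S} x_t x_t^\top$. For any direction $u\in\R^d$, the assumption $rB_2^d \subseteq X$ implies $\max_i \langle u, x_i\rangle^2 \geq r^2\|u\|^2$, whence
\[
u^\top A_n u \;=\; r^2\|u\|^2 + \sum_{t\in S}\langle u, x_t\rangle^2 \;\leq\; (1+|S|)\max_i\langle u, x_i\rangle^2.
\]
Taking square roots, comparing support functions, and invoking polar duality for symmetric convex bodies yields $\cE_n/\sqrt{1+|S|}\subseteq X$, so $\alpha \leq \sqrt{1+|S|}$.

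The remaining step is to bound $|S|$ using the log-determinant potential $\Phi_t := \log\det A_t$. Each update contributes $\Phi_t-\Phi_{t-1} = \log(1+s_t)\geq\log 2$, since the algorithm updates only when $s_t>1$. By AM--GM on the eigenvalues of $A_n$, $\Phi_n \leq d\log(\Tr(A_n)/d) \leq d\log(r^2 + |S|R^2/d)$, while $\Phi_0 = d\log r^2$. Combining, $|S|\log 2 \leq \Phi_n-\Phi_0 \leq d\log(1 + |S|(R/r)^2/d)$, which solves to $|S|=O(d\log(R/r+1))$, hence $\alpha = O(\sqrt{d\log(R/r+1)})$, establishing part~1.

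For part~2, where only $\rho := R/r$ is available, the plan is to replace the fixed regularizer $r^2 I$ by a growing one $r_t^2 I$, where $r_t := R_t/\rho$ and $R_t := \max_{s\leq t}\|x_s\|$. Because $R_t\leq R$ we still have $r_t \leq r$ throughout, so the approximation analysis above transfers verbatim and gives $\alpha \leq \sqrt{1+|S|}$. The principal obstacle---and where I expect the main technical difficulty to lie---is the potential argument: the initial value $\log\det A_1$ depends on $\|x_1\|$, which may be arbitrarily small compared with $r$, and scale changes (increases of $r_t$) inflate $A_t$ without counting as updates. The plan is to use the scale-invariant potential $\Phi_t := \log\det(A_t/r_t^2)$, which starts at $\Phi_1 = \log(1+\rho^2)$ \emph{independently of $\|x_1\|$}, still gains $\geq \log 2$ per update, and is monotone non-increasing across scale changes. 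The remaining task is to control the total non-positive contribution of scale changes: a decrease at a scale change only occurs in directions where $M_{t-1} := A_{t-1} - r_{t-1}^2 I$ has accumulated mass, and each such decrease can be charged amortizedly against the past update event that placed mass in that direction. Carrying out this charging is expected to yield $|S| = O(d\log\rho)$ and hence $\alpha = O(\sqrt{d\log(\rho+1)})$, completing part~2.
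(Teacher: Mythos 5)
Your proof of part~1 is correct, and it takes a genuinely different route from the paper's. The paper's Algorithm~1 replaces $\cE_{t-1}$ with the \emph{minimum-volume} ellipsoid containing $\cE_{t-1}\cup\{\pm x_t\}$ (a multiplicative rank-one update that scales only the semi-axis aligned with $A_{t-1}x_t$), and then measures progress with the potential $\potfun{\astar}(A_t) = \fnorm{\astar A_t^{-1}}^2 - 2\log\detv{\astar A_t^{-1}}$ \emph{relative to a reference covering ellipsoid $\cE^*$ of bounded aspect ratio}. Your update $A_t\mapsto A_{t-1}+x_tx_t^\top$ produces a strictly larger ellipsoid than the minimum-volume one, but it buys you a clean additive structure: $A_n = r^2I + \sum_{t\in S}x_tx_t^\top$, which feeds directly into the support-function comparison $u^\top A_nu\le(1+|S|)\max_i\langle u,x_i\rangle^2$ and the $\log\detv{A_t}$ counting argument. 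Your argument never introduces the auxiliary $\cE^*$ or Theorem~\ref{thm:simul_ellipsoid_approx}, which is a real simplification for part~1; the paper's extra machinery is there largely because it is reused verbatim for part~2.

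Your plan for part~2, however, has a gap that I do not believe can be closed by the charging argument you sketch. Both (a) the inequality $\alpha\le\sqrt{1+|S|}$ and (b) the target bound $|S|=O(d\log\rho)$ are load-bearing, and (b) is false. Consider $d=2$, $\rho=\sqrt2$, and the stream
\[
x_{2j-1}=(2^{\,j-1}\varepsilon,\,0),\qquad x_{2j}=(0,\,2^{\,j-1}\varepsilon),\qquad j=1,\dots,k+1.
\]
The final body $X$ is a cross-polytope of radius $2^{k}\varepsilon$, so $\kappa(X)=\sqrt2$ regardless of $k$. But at each step the scale $R_t$ doubles, and with $r_t=R_t/\sqrt2$ the rescaled matrix $A'_t = r_t^2 I + M_{t-1}$ has every eigenvalue of order $4^{\,j}\varepsilon^2$ while $\|x_t\|^2 = 4^{\,j+1}\varepsilon^2$; one checks that $x_t^\top (A'_t)^{-1}x_t>1$ on (essentially) every step, so $|S|=\Theta(k)$ grows without bound as $\varepsilon\to 0$, while $d\log(\rho+1)=O(1)$. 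Thus no charging scheme can deliver $|S|=O(d\log\rho)$ — the statement simply isn't true — and $\sqrt{1+|S|}$ is a hopelessly loose proxy for $\alpha$ in this regime (the algorithm's actual approximation factor stays $O(1)$ here; the looseness is in the analysis, not the algorithm). The paper sidesteps this by \emph{not} counting updates at all: it keeps $S_t=\fnorm{\astar A_t^{-1}}^2$ and $P_t=2\log\detv{\astar A_t^{-1}}$ as separate quantities, proves the two inequalities $S_t\le S_u + Q_t(P_t-P_u)$ and $S_t\le S_{t-1}+dQ_t$ (Lemma~\ref{lemma:invariants}), models the aspect-ratio floor (your growing regularizer) as injection of ``ghost points'' into Algorithm~\ref{alg:greedy_maxev}, and then in Lemma~\ref{lem:bound-on-S_n} splits the stream into groups where $Q_t=\max_{i\le t}\|\astar x_i\|^2$ grows by a factor at most $e$, telescoping to bound $S_n$ (and hence $\sigma_{\max}(\astar A_n^{-1})$) directly. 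If you want to salvage your additive-update algorithm for part~2, you will likely need an analogous analysis that bounds the spectrum of $\astar A_n^{-1}$ for a reference $\cE^*$ rather than the combinatorial count $|S|$.
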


The ratio \(\nfrac{R}{r}\) gives an upper bound to the \textit{aspect ratio} \(\kappa(X)\), a quantity we formally define in Definition~\ref{def:aspect_ratio}.
We present and analyze the algorithm for item 1 in Theorem~\ref{thm:simple_alg_arb_ellipsoid} and the one for item 2 in  Theorem~\ref{thm:greedy_approx}. 
 Our results provide a nearly optimal approximation, since $\sqrt{d\logv{\nfrac{R}{r} +1}}$ is worse than $\sqrt{d}$ (the best possible factor for the offline setting) by only a factor of $O\inparen{\sqrt{\logv{\nfrac{R}{r} +1}}}$. Our algorithms store only $O(d^2)$ floats -- this amount of memory is necessary to represent an ellipsoid in $\mathbb{R}^d$.
We note that the algorithm from item 1 is similar to the algorithm for the non-symmetric case of the problem presented in~\cite{mukhopadhyay2010approximate}.

\textbf{Hyperplane representation of a polytope.} Our algorithms also work in the setting where instead of receiving points $x_i$, we receive hyperplanes (or, more precisely, slabs) $\{y:|\langle x_i, y\rangle| \leq 1\}$ defining body $X$. We describe this setting in Appendix~\ref{section:mvie}.

\textbf{Applications to volume estimation.} Our algorithms directly yield concrete results for the problem of estimating the volume of a symmetric convex polytope. In particular, composing our ellipsoidal approximation procedures with the Gaussian cooling algorithm of \cite{cousins15} yields an $\widetilde{O}\inparen{nd^2+d^3\logv{\nfrac{R}{r}} +d^3}$-time algorithm for approximating the volume of a symmetric convex polytope. This guarantee is comparable to that obtained by using the algorithm of \cite{pmlr-v99-cohen19a} as a preprocessing step prior to Gaussian cooling. Furthermore, this improves upon the best guarantee for this problem suggested by \cite{he21}, which is $\widetilde{O}\inparen{nd^{3.2}\mathrm{polylog}\inparen{\nfrac{R}{r}}}$ time (\cite{he21} assumes the KLS hyperplane conjecture but does not require that $X$ be symmetric).

\textbf{Limitations to approximating John's ellipsoid.} We note that one could try to obtain a good ellipsoidal approximation $\cE$ for $X$ in the streaming model by finding a $\beta$-approximation for John's ellipsoid. Such an ellipsoid $\cE$ would provide a $\beta \sqrt{d}$ approximation for $X$. However, we give a lower bound on the approximability of John's ellipsoid for a natural class of streaming algorithms. We show that none of the algorithms in this class can find a better than $\sqrt{d}$-approximation for John's ellipsoid. Thus, the approach of approximating $X$ via approximating John's ellipsoid potentially may only yield an $O(\sqrt{d} \cdot \sqrt{d}) = O(d)$ approximation.

\paragraph{Independent and concurrent work on convex hull approximation.}
Independent of and concurrent to our work,
Woodruff and Yasuda~\cite{woodruff2022high} consider a problem of maintaining a coreset for $\ell_\infty$ subspace embedding: given a stream of points, choose a small subset of them so that the symmetric convex hull of this subset is a good approximation to the convex hull of the original points.
This problem and the one we study are closely related. In both of them, we need to maintain an approximation for the symmetric convex hull of a stream of points. However, Woodruff and Yasuda approximate the convex hull with a convex hull of a small coreset, while we approximate the convex hull with an ellipsoid. The approximation guarantee of \cite{woodruff2022high} is $O(\sqrt{d\logv{n\kappaol(X)}}$, where $\kappaol(X)$ is the online condition number.
It has the same dependence on $d$ as our guarantee; $\kappaol(X)$ is closely related to but different from the parameter $\kappa(X)$  we use.
Also, by applying John's Theorem, the authors of \cite{woodruff2022high} get an $O(d\sqrt{\logv{n\kappaol(X)}})$ ellipsoidal approximation to the convex hull. Our approximation guarantee is better than this one by a factor of $\sim \sqrt{d}$ and does not depend on $n$.

\paragraph{A Proof Sketch of the Main Result.} We now give a proof outline of Theorem \ref{thm:main_result_informal}.

Our first algorithm, Algorithm~\ref{alg:greedy_maxev}, maintains an ellipsoid $\cE_t$ covering $\conv{\inbraces{\pm x_1, \dots, \pm x_t}}$. Initially, $\cE_0$ is simply a ball of radius $r$. Upon receiving point $x_{t+1}$, the algorithm updates $\cE_t$ to $\cE_{t+1}$ by computing the minimal volume ellipsoid containing $\cE_t$ and $\pm x_{t+1}$. 

We need to prove that this algorithm achieves an $\alpha = O(\sqrt{d\logv{\nfrac{R}{r} +1}})$ approximation. By construction, $\cE_n$ contains $X$. Now we need to prove that $\cE_n \subseteq \alpha \cdot X$. We first prove a  different statement. We compare ellipsoid $\cE_n$ not to $X$ but rather to an ellipsoid $\cE^{*}$ that contains $X$. We show that $\cE_n \subseteq \alpha' \cE^{*}$ for every ellipsoid $\cE^{*}$ such that (i) $\cE^{*}$ contains $X$ and (ii) $\cE^{*}$ has aspect ratio (the ratio of its longest to shortest semi-axis) at most $\nfrac{R}{r}$. 
To this end, we define a potential function $\Phi$ (see Definition~\ref{def:potential}) with the following properties:
\begin{itemize}
    \item $\cE_n \subseteq \alpha' \cE^{*}$ for $\alpha'=O(\sqrt{\Phi})$ (see Lemma~\ref{lemma:phi_control_max_sv})
    \item Initially, $\Phi$ is $O(d\logv{\nfrac{R}{r} +1})$ (see Lemma~\ref{lemma:simple_alg_initial}).
    \item The value of the potential function is non-increasing over time (see Lemma~\ref{lemma:invariants_basic}).
\end{itemize}
These properties imply that 
$\cE_n \subseteq \alpha' \cE^{*}$ for $\alpha'=O(\sqrt{d\logv{\nfrac{R}{r} +1}})$.
Then we prove in Theorem~\ref{thm:simul_ellipsoid_approx}, that this implies that $\cE_n\subseteq \alpha X$ with $\alpha = \sqrt{2}\alpha'$, as required.

For this algorithm to perform well, it must know $r$ or a reasonable estimate for $r$.
However, if we do not have any estimate on $r$, the performance of the algorithm may be arbitrarily bad. 
On the technical level, the challenge is that the initial value of potential $\Phi$ may be arbitrarily large.

Algorithm~\ref{alg:general_greedy_maxev} does not need to know $r$ but instead needs to have an estimate $\xi$ for the aspect ratio of $X$. Algorithm~\ref{alg:general_greedy_maxev} updates $\cE_i$ in two steps: first it performs the update step from Algorithm~\ref{alg:greedy_maxev} and then ensures that the aspect ratio of the obtained ellipsoid is roughly at most $\xi$ (if it is more than that, it expands the semiaxes of the ellipsoid appropriately). The analysis of Algorithm~\ref{alg:general_greedy_maxev} is based on that of Algorithm~\ref{alg:greedy_maxev} but is substantially more complex. We use a pair of potential function $S$ and $P$ and keep track of their evolutions over time.

\paragraph{Outline} The rest of our paper is organized as follows. In Section~\ref{section:preliminaries_notation}, we present definitions and notation used in this paper. In Section~\ref{section:natural_attempt}, we describe the first algorithm from Theorem~\ref{alg:greedy_maxev}. The algorithm itself is very simple but its analysis is insightful and  captures the core technical ideas used later. In Section \ref{section:revised_algorithm}, we present the second algorithm from Theorem~\ref{alg:general_greedy_maxev}. 
The analyses of Theorems~\ref{alg:greedy_maxev} and \ref{alg:general_greedy_maxev} relies on the fact that every centrally symmetric convex body is well approximated by an intersection of ellipsoids with bounded aspect ratio. We prove this fact in Appendix \ref{section:ellipsoidal_approximations}. In Appendix~\ref{section:tracking_mvoe}, we present our lower bound on the approximability of John's ellipsoid. Finally, in Appendix~\ref{section:mvie}, we discuss the equivalence between the problem we study and an alternate formulation wherein we receive linear constraints one-at-a-time instead of points.

\section{Preliminaries and Notation}
\label{section:preliminaries_notation}

\paragraph{Notation} Consider a sequence of points $\inbraces{x_1,\dots, x_n} \subset {\mathbb R}^d$. We denote the symmetric convex hull of the first $t$ points by $X_t=\conv{\inbraces{\pm x_1, \dots, \pm x_t}}$
and the symmetric convex hull of all points $\{\pm x_i\}$ by $X = X_n$. 
We denote the standard Euclidean norm of a vector $v$ by $\norm{v}$ and the Frobenius norm of a matrix $A$ by $\fnorm{A}$. We denote the singular values of a matrix $A \in \R^{d\times d}$ by $\sigma_1(A),\dots, \sigma_d(A)$.
Let $\sigma_{\max}(A)$ and $\sigma_{\min}(A)$ be the largest and smallest  singular values of $A$, respectively. We say that $X$ is centrally symmetric if $X=-X$.

Denote the $\ell_p$-unit ball by $B_p^d = \{x\in \R^d: \|x\|_p \leq 1\}$.   Given a set $S \subseteq \mathbb{R}^d$, its polar is $S^\circ \coloneqq \{ y \in \mathbb{R}^d \colon \sup_{x \in S} | \langle x, y \rangle | \leq 1 \}$. We use natural logarithms unless otherwise specified. 

In this paper, we will work extensively with ellipsoids. We will always assume that all ellipsoids and balls we consider are centered at the origin; we will not explicitly state that. We use the following representation of ellipsoids.
For a non-singular matrix $A\in\R^{d\times d}$, let $\cE_A \coloneqq \inbraces{x \suchthat \norm{Ax}\le 1}$. In other words, matrix $A$ defines a bijective map  of $\cE_A$ to the unit ball $B_2^d$.
Every ellipsoid (centered at the origin) has such a representation.
We note that this representation is not unique as matrices $A$ and $MA$ define the same ellipsoid if matrix $M$ is orthogonal (since $\|Av\| =\|MAv\|$ for every vector $v$). Now consider the singular value decomposition of  $A$: $A = U\Sigma^{-1} V^T$ (it will be convenient for us to write $\Sigma^{-1}$ instead of standard $\Sigma$ in the decomposition). The diagonal entries of $\Sigma$ are exactly the semi-axes of $\cE_A$. As mentioned above, matrices $U\Sigma^{-1} V^T$ and $U'\Sigma^{-1} V^T$ define the same ellipsoid for any orthogonal \(U' \in \mathbb{R}^{d \times d}\); in particular, every ellipsoid can be represented by a matrix of the form $A=\Sigma^{-1} V^T$.

Our goal is to design an algorithm for Problem \ref{problem:main_problem} that achieves a good approximation $\alpha$ and at the same time uses as little memory as possible. To understand what value of $\alpha$ is achievable in the offline case, recall John's Theorem.

\mybox{\begin{theorem}[John's Theorem, \cite{john1948}]
\label{thm:johns}
I. Let $X$ be a centrally symmetric convex body. Consider the minimum volume ellipsoid $\cE$ containing $X$. Then we have $\nfrac{\cE}{\sqrt{d}} \subseteq X \subseteq \cE$.

II. There exists a centrally symmetric body $X$ in ${\mathbb R}^d$ (e.g. hypercube  $B_\infty^d$ and cross-polytope $B_1^d$) such that
there is no ellipsoid $\cE$ that approximates $X$ within a factor 
of $\alpha < \sqrt{d}$: $\nfrac{\cE}{\sqrt{d}} \subseteq X \subseteq \cE$.
\end{theorem}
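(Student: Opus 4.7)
My plan is to first reduce to a canonical position: since the minimum-volume enclosing ellipsoid is affine-equivariant, writing $\cE = AB_2^d$ and applying the map $A^{-1}$, I may assume without loss of generality that $\cE = B_2^d$, reducing the task to showing $B_2^d/\sqrt{d} \subseteq X$. The main tool will be John's optimality characterization: when $B_2^d$ is the minimum-volume ellipsoid containing the centrally symmetric body $X$, there exist contact points $u_1,\dots,u_m \in X \cap \partial B_2^d$ and positive weights $c_1,\dots,c_m$ with $\sum_i c_i u_i u_i^T = I$. I would derive this by formulating MVEE as $\min\{-\log\det Q : x^T Q x \le 1 \text{ for all } x \in X\}$ over positive definite $Q$, writing its KKT conditions at the optimum $Q=I$, and using Carath\'eodory to pass to a finitely supported dual certificate concentrated on boundary contact points.

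Taking the trace of $\sum_i c_i u_i u_i^T = I$ gives $\sum_i c_i = d$. For any $y$ with $\|y\|\le 1/\sqrt{d}$, applying both sides to $y$ yields $y = \sum_i c_i \langle u_i, y\rangle u_i$, and Cauchy--Schwarz then gives
\[
\sum_i c_i |\langle u_i, y\rangle| \;\le\; \sqrt{\textstyle\sum_i c_i}\cdot\sqrt{\textstyle\sum_i c_i \langle u_i, y\rangle^2} \;=\; \sqrt{d}\cdot \|y\|\;\le\; 1,
\]
using $\sum_i c_i \langle u_i, y\rangle^2 = y^T(\sum_i c_i u_i u_i^T)y = \|y\|^2$. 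Setting $\lambda_i = c_i \langle u_i, y\rangle$, we have $y = \sum_i \lambda_i u_i$ with $\sum_i |\lambda_i|\le 1$. Central symmetry of $X$ gives $\pm u_i \in X$, so writing $y = \sum_i |\lambda_i|\,(\sign(\lambda_i) u_i) + (1-\sum_i |\lambda_i|)\cdot 0$ expresses $y$ as a convex combination of points of $X$ (using $0 \in X$), proving $y \in X$.

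\textbf{Part II (tightness).} For the lower bound I would specialize to $X=B_\infty^d$. Suppose an ellipsoid $\cE = AB_2^d$ satisfies $\cE/\alpha \subseteq B_\infty^d \subseteq \cE$. The outer containment $AB_2^d/\alpha \subseteq B_\infty^d$ forces each row of $A$ to have $\ell_2$-norm at most $\alpha$, i.e.\ $\Tr(AA^T)\le \alpha^2 d$. The inner containment requires $\|A^{-1}v\|_2\le 1$ for every vertex $v\in\{\pm 1\}^d$; summing over all $2^d$ vertices and using $\sum_v v v^T = 2^d I$ yields $\Tr(A^{-T}A^{-1})\le 1$. Matrix Cauchy--Schwarz applied to $A$ and $A^{-1}$ then gives
\[
d \;=\; \Tr(A\cdot A^{-1}) \;\le\; \sqrt{\Tr(AA^T)\cdot \Tr(A^{-T}A^{-1})} \;\le\; \sqrt{\alpha^2 d}\;=\; \alpha\sqrt{d},
\]
so $\alpha\ge\sqrt{d}$. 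The cross-polytope case follows by polar duality. The main obstacle in the whole argument is establishing John's characterization rigorously: the MVEE problem has infinitely many constraints (one per point of $X$), so the standard finite-dimensional KKT reasoning must be combined with a compactness/Carath\'eodory step to extract a finite dual certificate supported on contact points.
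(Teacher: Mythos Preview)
The paper does not prove this theorem; it is stated as a classical result and attributed to \cite{john1948}, serving purely as background in Section~\ref{section:preliminaries_notation}. So there is no ``paper's own proof'' to compare against.

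That said, your proposal is a correct and standard route. Part~I is the usual argument via John's optimality/KKT characterization (the contact-point decomposition $\sum_i c_i u_i u_i^T = I$), followed by the Cauchy--Schwarz step to place any $y$ with $\|y\|\le 1/\sqrt{d}$ in the absolutely convex hull of the contact points. You correctly flag that the only substantive work is justifying the finite dual certificate from the infinitely-constrained MVEE program; once one has the decomposition of the identity, the rest is a two-line computation. Part~II for $B_\infty^d$ is also clean: your averaging over $\{\pm1\}^d$ to get $\Tr(A^{-T}A^{-1})\le 1$ and the Frobenius Cauchy--Schwarz $d=\Tr(AA^{-1})\le \|A\|_F\|A^{-1}\|_F$ give $\alpha\ge\sqrt d$ directly, and the cross-polytope case does follow by polarity. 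Nothing in the paper adds to or departs from this.
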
}


In this work, we consider a natural class of one-pass streaming algorithms that we call \emph{monotonic algorithms}.

\begin{definition}[Monotonic Algorithm]
We call an algorithm for Problem \ref{problem:main_problem} \emph{monotonic} if it produces a sequence of ellipsoids $\cE_t$ satisfying $X_t \subseteq \cE_t$ and $\cE_{t} \subseteq \cE_{t+1}$ for all timestamps $t$. 
\end{definition}

Monotonic algorithms have the advantage that once they decide
that a certain point $x$ belongs to ellipsoid $\cE_t$, they commit to this decision: all consecutive ellipsoids $\cE_{t+1}, \cE_{t+2}, \dots$ also contain point $x$.
The approximation factor of our algorithms depend sublogarithmically on the \textit{aspect ratio} of the convex body $X$. 

\begin{definition}[Aspect Ratio]\label{def:aspect_ratio}
Consider a centrally symmetric convex body $X$. Let $r$ be the radius of the largest ball $r\cdot B_2^d$ contained in $X$ and $R$ be the radius of the smallest ball $R\cdot B_2^d$ that contains $X$.
Then the aspect ratio of $X$ is written as $\kappa(X) = \nfrac{R}{r}$.
\end{definition}

Logarithmic dependences on the aspect ratio have previously appeared for algorithms on convex bodies; for example, the algorithms in \cite{he21} for rounding and computing the volume of a convex body have a runtime that depends on \(\log \kappa(X)\).
We also recall the condition number of a matrix:

\begin{definition}[Condition Number of a Matrix]
The condition number $\kappa(A)$ of a symmetric nonsingular matrix $A$ is the ratio of its largest to smallest singular values:
$\kappa(A) = \nfrac{\sigma_{\max}(A)}{\sigma_{\min}(A)}$.
\end{definition}
The notions of aspect ratio of a convex body and condition number of a matrix are closely related. It is immediate that the aspect ratio of an ellipsoid $\cE$ equals the ratio of its longest to shortest semi-axes. Consequently, $\kappa(\cE_A) = \kappa(A)$.

\section{Scale-Dependent Algorithm for Ellipsoid Approximation}
\label{section:natural_attempt}

In this section, we present and analyze a simple algorithm for Problem \ref{problem:main_problem}; see Algorithm~\ref{alg:greedy_maxev}. This algorithm must be given a radius $r$ such that $r \cdot B_2^d \subseteq X$.
The approximation guarantee of the algorithm linearly depends on 
$\sqrt{\logv{\nfrac{R}{r}}}$ where $R = \max_t \|x_t\|$ is the radius of the smallest ball that contains $X$.
\begin{algorithm}
\caption{Streaming Ellipsoidal Approximation -- Scale-Dependent Algorithm\label{alg:greedy_maxev}}
\begin{algorithmic}[1]
    \STATE \textbf{Input}: A stream of points $x_1, \dots, x_n$ and a value $r$ such that:
    \begin{align*}
        r \cdot B_2^d \subset X = \conv{\inbraces{\pm x_1, \dots, \pm x_n}}
    \end{align*}
    \STATE \textbf{Output}: Ellipsoid $\cE_n$ that covers $X$.
    \STATE Initialize $\cE_0 = r \cdot B_2^d$.
    \FOR{$t = 1, \ldots, n$}
        \STATE Read point $x_t$ from the stream.
        \STATE Let $\cE_t$ be the ellipsoid of smallest volume (centered at $0$) that contains both $\cE_{t-1}$ and $x_t$.
        \label{line:simple_update_rule}
    \ENDFOR
    \STATE \textbf{Output}: $\cE_n$.
\end{algorithmic}
\end{algorithm}

The key line in the algorithm is Line \ref{line:simple_update_rule} which updates \(\cE_t\) to contain \(x_t\); we  refer to it as the ``update rule.''
Claim \ref{claim:minimum_volume_ellipsoid_update}, which we prove in Appendix~\ref{sec:rank-one-update}, shows how to compute the update.
\mybox{\begin{claim}
\label{claim:minimum_volume_ellipsoid_update}
Given a matrix $A_{t-1}$ for $\cE_{t-1}$, the updated matrix 
$A_{t}$ for $\cE_{t}$ can be updated using the following formula: $A_t = \widehat{A} A_{t-1}$, where
\begin{equation}
\label{eq:A-hat}
\widehat{A} = \inparen{I - \inparen{1 - \frac{1}{\norm{A_{t-1}x_t}}}\inparen{\frac{\inparen{A_{t-1}x_t}\inparen{A_{t-1}x_t}^T}{\norm{A_{t-1}x_t}^2}}}.
\end{equation}
\end{claim}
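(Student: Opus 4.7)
The plan is to reduce the problem to finding the minimum-volume ellipsoid containing the unit ball and a single point, and then solve that canonical problem explicitly.

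First, I would apply the linear change of variables $y = A_{t-1}x$. Since $\cE_{A_{t-1}} = \{x : \|A_{t-1}x\| \leq 1\}$, this map sends $\cE_{t-1}$ to the unit ball $B_2^d$ and sends $x_t$ to the point $v \coloneqq A_{t-1}x_t$. Because linear transformations scale volumes uniformly, the minimum-volume ellipsoid containing $\cE_{t-1}$ and $x_t$ is the preimage under $A_{t-1}$ of the minimum-volume ellipsoid $\cE'$ containing $B_2^d$ and $v$. Writing $\cE' = \cE_{\widehat A}$, we get $\cE_t = \{x : \|\widehat A \, A_{t-1} x\|\leq 1\}$, which gives the update $A_t = \widehat A \, A_{t-1}$. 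So it suffices to identify $\widehat A$.

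Next, I would parametrize origin-centered ellipsoids as $\{y : y^\top M y \leq 1\}$ for a positive-definite $M$, so that volume is proportional to $(\det M)^{-1/2}$. The constraint $B_2^d \subseteq \cE'$ becomes $M \preceq I$, and the constraint $v \in \cE'$ becomes $v^\top M v \leq 1$. Thus we want to maximize $\det M$ subject to $M \preceq I$ and $v^\top M v \leq 1$. By a symmetrization argument — averaging $M$ over the compact group of orthogonal transformations fixing the direction $u = v/\|v\|$ using the concavity of $\log\det$ on positive-definite matrices — we may assume the optimizer commutes with every such rotation. Therefore $M$ has the form $M = \beta uu^\top + \gamma(I - uu^\top)$ with $\beta,\gamma \in (0,1]$. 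The constraint $v^\top M v \le 1$ becomes $\beta\|v\|^2 \le 1$; maximizing $\det M = \beta\gamma^{d-1}$ forces $\gamma = 1$ and $\beta = 1/\|v\|^2$. So $M = I - \bigl(1 - 1/\|v\|^2\bigr)uu^\top$.

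Finally, I would verify that the matrix $\widehat A$ claimed in \eqref{eq:A-hat} satisfies $\widehat A^\top \widehat A = M$, which certifies $\cE_{\widehat A} = \cE'$. Writing $s = \|v\|$ and $c = 1 - 1/s$, direct expansion gives $\widehat A^\top \widehat A = (I - c uu^\top)^2 = I - (2c - c^2) uu^\top$, and a short computation shows $2c - c^2 = 1 - 1/s^2$, matching $M$. Substituting $u = v/\|v\| = A_{t-1}x_t/\|A_{t-1}x_t\|$ yields precisely the rank-one formula in \eqref{eq:A-hat}. The only step that requires any care is the symmetrization that reduces $M$ to the two-parameter family — everything else is an explicit computation.
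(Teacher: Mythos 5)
Your proof is correct, but it deliberately takes the route the paper mentions and then sets aside. Both arguments begin identically: change variables by $A_{t-1}$ so the problem becomes finding the minimum-volume ellipsoid containing $B_2^d$ and the point $v = A_{t-1}x_t$, and then pull back via $A_t = \widehat A\, A_{t-1}$. Where you diverge is in identifying that canonical ellipsoid. You parametrize by $M = \widehat A^\top \widehat A$ and invoke symmetrization (averaging over rotations fixing $u = v/\|v\|$, justified by strict concavity of $\log\det$ and uniqueness of the optimizer) to reduce $M$ to the two-parameter family $\beta uu^\top + \gamma(I - uu^\top)$, after which the optimization is a one-line calculation. The paper instead remarks that the symmetrization route is geometrically clear but explicitly opts not to use it: it derives an upper bound $\det \widehat A \le 1/\|v\|$ directly from the facts that $\widehat A \preceq I$ forces all singular values $\le 1$ and that $\|\widehat A v\| \le 1$ forces $\sigma_{\min}(\widehat A) \le 1/\|v\|$, and then exhibits the rank-one formula as achieving that bound. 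Your approach is more self-contained and explains \emph{why} the optimizer has the rank-one structure; the paper's is shorter and avoids setting up the averaging argument. One small point worth being explicit about in your write-up: after the change of variables, you should state that $\|v\| > 1$ (i.e.\ $x_t \notin \cE_{t-1}$), which is what makes $\gamma = 1$, $\beta = 1/\|v\|^2$ the maximizer rather than $\beta = \gamma = 1$; the paper handles this by discarding points already inside $\cE_{t-1}$.
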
}

We first show that ${\cal E}_n$ provides a good approximation to every ellipsoid \({\cal E}^{*}\) with aspect ratio $\nfrac{R}{r}$ containing $X$.
\mybox{\begin{theorem}
\label{thm:simple_alg_arb_ellipsoid}
Let ${\cal E}^{*} \supseteq X$ be an ellipsoid containing $X$ with aspect ratio at most $\nfrac{R}{r}$.
Then the output of Algorithm \ref{alg:greedy_maxev} $\cE_n$ satisfies:
\begin{align*}
    {\cal E}_n \subseteq \alpha \cdot \cE^{*}
\end{align*}
where $\alpha = O\inparen{\sqrt{d\inparen{\logv{\nfrac{R}{r}}+1}}}$. Algorithm \ref{alg:greedy_maxev} runs in time $O(nd^2)$ and stores at most $O(d^2)$ floating point numbers.
\end{theorem}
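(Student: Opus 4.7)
The plan is to reformulate $\cE_n \subseteq \alpha \cE^*$ spectrally, then exhibit a single scalar potential that (i) upper-bounds the relevant eigenvalue, (ii) starts at $O(d\log(R/r+1))$, and (iii) is non-increasing along the greedy update.

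Setting $M_A := A^{-1}A^{-\top}$ (so that $\cE_A = \{x : x^\top M_A^{-1} x \le 1\}$), write $M_t := M_{A_t}$, $M^* := M_{A^*}$, and $P_t := (M^*)^{-1/2} M_t (M^*)^{-1/2}$. Then $\cE_n \subseteq \alpha \cE^*$ is equivalent to $M_n \preceq \alpha^2 M^*$, i.e.\ $\lambda_{\max}(P_n) \le \alpha^2$, and in fact $\lambda_{\max}(P_n) = \sigma_{\max}(A^* A_n^{-1})^2$. Using Claim~\ref{claim:minimum_volume_ellipsoid_update} together with $\widehat{A}^{-1} = I + (\|A_{t-1}x_t\|-1) u_t u_t^\top$ for $u_t := A_{t-1}x_t/\|A_{t-1}x_t\|$, the update rewrites as a rank-one PSD shift
$$M_t = M_{t-1} + \gamma_t x_t x_t^\top, \qquad \gamma_t := 1 - \|A_{t-1}x_t\|^{-2} \in [0,1),$$
whence $P_t = P_{t-1} + \gamma_t z_t z_t^\top$ for $z_t := (M^*)^{-1/2} x_t$. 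Two identities will be repeatedly used: (a) $\|z_t\|^2 = \|A^* x_t\|^2 \le 1$, since $x_t \in X \subseteq \cE^*$; (b) $z_t^\top P_{t-1}^{-1} z_t = x_t^\top M_{t-1}^{-1} x_t = \|A_{t-1}x_t\|^2 = 1/(1-\gamma_t)$.

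The core of the proof is the potential
$$\Phi_t := \Tr(P_t) - \log\det P_t = \sum_{i=1}^d \bigl(\lambda_i(P_t) - \log \lambda_i(P_t)\bigr).$$
Monotonicity: $\Tr(P_t) - \Tr(P_{t-1}) = \gamma_t \|z_t\|^2$, and by the matrix determinant lemma together with (b), $\log\det P_t - \log\det P_{t-1} = \log(1 + \gamma_t z_t^\top P_{t-1}^{-1} z_t) = -\log(1-\gamma_t)$, so $\Phi_t - \Phi_{t-1} = \gamma_t\|z_t\|^2 + \log(1-\gamma_t) \le \gamma_t + \log(1-\gamma_t) \le 0$. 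Initial bound: $P_0 = r^2(M^*)^{-1}$, and the aspect-ratio hypothesis together with $\cE^* \supseteq X \supseteq r B_2^d$ forces every semi-axis of $\cE^*$ into $[r, R]$ (equivalently, $\sigma_i(A^*) \in [1/R, 1/r]$), so the eigenvalues of $P_0$ lie in $[r^2/R^2, 1]$; this yields $\Tr(P_0) \le d$ and $-\log\det P_0 \le 2d\log(R/r)$, giving $\Phi_0 = O(d\log(R/r+1))$. Spectral control: the scalar $f(x) := x - \log x$ satisfies $f(x) \ge 1$ for all $x>0$ and $f(x) \ge x/2$ for $x \ge 2$, which implies $\lambda_{\max}(P_n) \le 2\Phi_n$ in all cases.

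Chaining the three facts yields $\lambda_{\max}(P_n) \le 2\Phi_n \le 2\Phi_0 = O(d\log(R/r+1))$, i.e., $\cE_n \subseteq \alpha \cE^*$ for $\alpha = O(\sqrt{d\log(R/r+1)})$. The runtime and memory claims follow directly from Claim~\ref{claim:minimum_volume_ellipsoid_update}: only the $d \times d$ matrix $A_t$ is stored, and each step is a single rank-one modification computable in $O(d^2)$ time. The main obstacle is identifying the right potential: $\Tr(P_t)$ alone may grow without control, and $\log\det P_t$ alone only tracks the volume of $\cE_t$, but the combination $\Tr - \log\det$ enjoys the pointwise cancellation $\gamma + \log(1-\gamma) \le 0$ that makes each greedy step non-worsening, while its summand $f(x) = x - \log x$ simultaneously dominates the maximum eigenvalue.
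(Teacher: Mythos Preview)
Your proof is essentially the paper's argument in different clothing: your potential $\Phi_t=\Tr(P_t)-\log\det P_t$ is exactly the paper's $\Phi_{J^*}(A_t)=\|J^*A_t^{-1}\|_F^2-2\log\det(J^*A_t^{-1})$ (since $\Tr(P_t)=\Tr((M^*)^{-1}M_t)=\|A^*A_t^{-1}\|_F^2$ and $\log\det P_t=2\log\lvert\det(A^*A_t^{-1})\rvert$), your monotonicity step $\gamma_t\|z_t\|^2+\log(1-\gamma_t)\le 0$ is the paper's Lemma~\ref{lemma:invariant_basic_stronger} rewritten via the matrix determinant lemma, and your spectral control $\lambda_{\max}\le 2\Phi$ differs from the paper's $\sigma_{\max}^2\le\frac{e}{e-1}\Phi$ only in the constant.

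There is one small gap in your initial bound: from $\cE^*\supseteq rB_2^d$ and $\kappa(\cE^*)\le R/r$ you can conclude the shortest semi-axis is $\ge r$, but not that the longest is $\le R$ (the shortest semi-axis could exceed $r$, pushing the longest above $R$), so the eigenvalues of $P_0$ need not lie in $[r^2/R^2,1]$ as written. The paper patches this with a one-line reduction: if the shortest semi-axis of $\cE^*$ exceeds $R$ then $RB_2^d\subseteq\cE^*$, so it suffices to prove the statement for $\cE^{**}=RB_2^d$; after this WLOG one gets $\sigma_i(A^*)\in[r/R^2,1/r]$ and hence $\lambda_i(P_0)\in[r^4/R^4,1]$, which still yields $\Phi_0\le d(1+4\log(R/r))$. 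With that fix, your argument is complete and matches the paper's.
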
}

As stated, Theorem~\ref{thm:simple_alg_arb_ellipsoid} does not say that $\cE_n$ provides an
$\alpha$ approximation for $X$. However, the statement of \Cref{thm:simple_alg_arb_ellipsoid} holds simultaneously for all ellipsoids 
${\cal E}^{*}$ whose aspect ratio is at most that of \(X\).
As the following theorem shows, this is sufficient to get the desired result that $\cE_n \subseteq \alpha\sqrt{2}\cdot X$.

\mybox{\begin{theorem}
\label{thm:simul_ellipsoid_approx}
Consider a centrally symmetric convex body $X$ and an ellipsoid $\cE$. 
Assume that every ellipsoid $\cE^{*}$ that satisfies properties (i) and (ii)
\begin{center}
(i) $\cE^{*}$ contains $X$ \quad and \quad (ii) $\cE^{*}$ has aspect ratio of at most $\kappa(X)$
\end{center}
also contains $\cE$. Then $\cE \subseteq \sqrt{2} \cdot X$.
\end{theorem}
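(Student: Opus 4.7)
The proof will proceed by contradiction together with a hyperplane-separation argument. Suppose $\cE \not\subseteq \sqrt{2}\cdot X$ and pick a witness $p \in \cE$ with $p \notin \sqrt{2}\cdot X$. The plan is to construct a single ellipsoid $\cE^{*}$ that satisfies both properties (i) and (ii) of the theorem while $p \notin \cE^{*}$; this will contradict the hypothesis that every such $\cE^{*}$ contains $\cE$.

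\textbf{Step 1 (separation).} Because $\sqrt{2}\cdot X$ is a closed, centrally symmetric, convex body that does not contain $p$, the hyperplane separation theorem (combined with central symmetry) yields a unit vector $\hat{a}$ with $\abs{\inangle{\hat{a}, y}} \leq h_X(\hat{a})$ for every $y \in X$ (where $h_X(\hat{a}) := \sup_{y \in X} \inangle{\hat{a}, y}$ denotes the support function) and $\inangle{\hat{a}, p} > \sqrt{2}\, h_X(\hat{a})$. Set $w := \sqrt{2}\,h_X(\hat{a})$. Since $r \cdot B_2^d \subseteq X \subseteq R \cdot B_2^d$, one immediately obtains $r \leq h_X(\hat{a}) \leq R$, so $r\sqrt{2} \leq w \leq R\sqrt{2}$.

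\textbf{Step 2 (construction of $\cE^{*}$).} Define
\[
\cE^{*} := \inbraces{y \in \R^d \suchthat \frac{\inangle{\hat{a}, y}^{2}}{w^{2}} + \frac{\norm{y}^{2} - \inangle{\hat{a}, y}^{2}}{2R^{2}} \leq 1}.
\]
This is an ellipsoid centered at the origin whose semi-axis along $\hat{a}$ has length $w$ and whose remaining $d-1$ semi-axes have length $R\sqrt{2}$. Because $w \leq R\sqrt{2}$, its aspect ratio equals $R\sqrt{2}/w \leq R\sqrt{2}/(r\sqrt{2}) = R/r = \kappa(X)$, verifying (ii). For (i), every $y \in X$ satisfies $\inangle{\hat{a}, y}^{2} \leq h_X(\hat{a})^{2} = w^{2}/2$ and $\norm{y}^{2} \leq R^{2}$, so the two terms are each at most $\nfrac{1}{2}$ and their sum is at most $1$. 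Finally, $\inangle{\hat{a}, p}^{2} > 2\, h_X(\hat{a})^{2} = w^{2}$ shows that already the first term exceeds $1$ at $y = p$, so $p \notin \cE^{*}$. This contradicts the hypothesis, finishing the argument.

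\textbf{Main obstacle.} The crux is producing $\cE^{*}$ in step 2: it must simultaneously (a) contain $X$, (b) exclude $p$, and (c) have aspect ratio at most $\kappa(X)$. A separating hyperplane gives only a slab, which has unbounded aspect ratio viewed as a degenerate ``ellipsoid''; the trick is to ``fatten'' the slab into a genuine ellipsoid by combining it with the circumscribed ball $R \cdot B_2^d$, splitting the right-hand side of the defining inequality evenly between the slab and the ball. The resulting $\nfrac{1}{2} + \nfrac{1}{2} \leq 1$ allocation is precisely what produces the factor of $\sqrt{2}$ in the conclusion, and the inscribed-ball assumption $r \cdot B_2^d \subseteq X$ is exactly what prevents $w$ from being too small, thereby keeping the aspect ratio of $\cE^{*}$ within budget.
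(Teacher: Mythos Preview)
Your proof is correct and follows essentially the same approach as the paper: both arguments take a separating slab for $X$ and ``fatten'' it into an ellipsoid by combining the slab constraint with the circumscribed ball $R\cdot B_2^d$, with the even $\nfrac{1}{2}+\nfrac{1}{2}$ split producing the $\sqrt{2}$ factor and the inscribed-ball bound $h_X(\hat a)\geq r$ controlling the aspect ratio. The only cosmetic differences are that the paper phrases the argument directly (showing the intersection $\cA_1=\bigcap\cE^{*}$ lies in $\sqrt{2}\,X$) rather than by contradiction, and states a one-parameter generalization with aspect-ratio cap $q\cdot\kappa(X)$ before specializing to $q=1$.
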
}
We prove \Cref{thm:simul_ellipsoid_approx} in \Cref{section:ellipsoidal_approximations}.
Combining these theorems, we have \Cref{thm:simple_alg}.

\mybox{\begin{theorem}
\label{thm:simple_alg}
Algorithm~\ref{alg:greedy_maxev} gets an $\alpha = O\inparen{\sqrt{d\logv{\nfrac{R}{r} + 1}}}$ approximation:
$
    \nfrac{\cE_n}{\alpha} \subseteq X \subseteq \cE_n
$.
\end{theorem}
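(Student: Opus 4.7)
The plan is to combine Theorems~\ref{thm:simple_alg_arb_ellipsoid} and \ref{thm:simul_ellipsoid_approx} essentially verbatim; these two statements were set up precisely to glue together here.

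The containment $X \subseteq \cE_n$ is immediate from the construction of Algorithm~\ref{alg:greedy_maxev}. Inductively, $\cE_t \supseteq \cE_{t-1}$ and $x_t \in \cE_t$. Since $\cE_t$ is centered at the origin, it also contains $-x_t$, so $\cE_n \supseteq \{\pm x_1,\dots,\pm x_n\}$. Being a convex set, $\cE_n$ then contains $X=\conv\{\pm x_1,\dots,\pm x_n\}$. I would state this as a one-line inductive observation.

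For the reverse containment, I first note that the aspect ratio of $X$ satisfies $\kappa(X) \le \nfrac{R}{r}$, where $R = \max_t \|x_t\|$ (since $r\cdot B_2^d \subseteq X \subseteq R\cdot B_2^d$; the outer containment holds because $X$ is the convex hull of points of norm at most $R$). Consequently, every ellipsoid $\cE^{*}$ that contains $X$ and has aspect ratio at most $\kappa(X)$ also has aspect ratio at most $\nfrac{R}{r}$, so the hypothesis of Theorem~\ref{thm:simple_alg_arb_ellipsoid} is met. Applying that theorem yields $\cE_n \subseteq \alpha' \cE^{*}$ with $\alpha' = O\!\left(\sqrt{d\,(\logv{\nfrac{R}{r}}+1)}\right)$. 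Equivalently, $\nfrac{\cE_n}{\alpha'} \subseteq \cE^{*}$ for every such $\cE^{*}$.

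Setting $\cE := \nfrac{\cE_n}{\alpha'}$, the previous paragraph verifies the hypothesis of Theorem~\ref{thm:simul_ellipsoid_approx}. That theorem then gives $\nfrac{\cE_n}{\alpha'} \subseteq \sqrt{2}\cdot X$, i.e.\ $\nfrac{\cE_n}{\alpha} \subseteq X$ with $\alpha := \sqrt{2}\,\alpha' = O\!\left(\sqrt{d\,\logv{\nfrac{R}{r}+1}}\right)$, completing the proof. There is no real obstacle here; the only substantive checks are (a) that $\max_t \|x_t\|$ is a valid choice of outer radius $R$ for $X$ and (b) that ``aspect ratio at most $\kappa(X)$'' can be safely weakened to ``aspect ratio at most $\nfrac{R}{r}$'' when invoking Theorem~\ref{thm:simple_alg_arb_ellipsoid}. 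Both are immediate from the definitions, so the proof essentially writes itself from the two ingredients already developed.
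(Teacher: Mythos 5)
Your proof is correct and takes essentially the same route as the paper: the paper proves Theorem~\ref{thm:simple_alg} precisely by combining Theorems~\ref{thm:simple_alg_arb_ellipsoid} and~\ref{thm:simul_ellipsoid_approx}, using the observation that every ellipsoid with aspect ratio at most $\kappa(X) \leq \nfrac{R}{r}$ containing $X$ also satisfies the hypothesis of Theorem~\ref{thm:simple_alg_arb_ellipsoid}. The extra details you spell out (the inductive check that $X \subseteq \cE_n$, the bound $\kappa(X) \leq \nfrac{R}{r}$) are exactly the routine verifications the paper leaves implicit.
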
}

We focus now on proving \Cref{thm:simple_alg_arb_ellipsoid}.
\begin{proof}[Proof of \Cref{thm:simple_alg_arb_ellipsoid}]
\paragraph{Runtime and Memory Complexity} The key observation is that Algorithm \ref{alg:greedy_maxev} only stores the matrix $A_t$ representing the ellipsoid $\cE_t$ between iterations and does not require additional memory within an iteration. Hence, the memory complexity of Algorithm \ref{alg:greedy_maxev} is $O(d^2)$. Next, observe that computing the update rule as per Claim \ref{claim:minimum_volume_ellipsoid_update} requires only three matrix-vector products, which takes time $O(d^2)$. It immediately follows that the runtime of Algorithm \ref{alg:greedy_maxev} is $O(nd^2)$, as desired.

\paragraph{Correctness} We assume without loss of generality that for all $t\geq 1$, $\|A_{t-1}x_t\| > 1$, since Algorithm \ref{alg:greedy_maxev} ignores all points $x_t$ with $\|A_{t-1}x_t\| \leq 1$. In particular, when Algorithm \ref{alg:greedy_maxev} encounters such a point, it simply lets $A_t = A_{t-1}$.
Additionally, we assume that the shortest semi-axis of $\cE^{*}$ is at most $R$.
If not, we prove the statement for $\cE^{**} = R\cdot B_2^d$ (which is contained in $\cE^{*}$ by our assumption)  and get $\cE_n \subseteq  \alpha \cE^{**} \subseteq \alpha \cE^{*}$, as required.

Let $\astar$ be a matrix that defines the ellipsoid $\cE^{*}$: $\cE^{*} = \inbraces{x \suchthat \norm{\astar x} \le 1}$. Since all points $x_t$ lie in $X \subseteq \cE^{*}$, we have $\norm{\astar x_t} = \norm{\astar (-x_t)} \le 1$.
Since the shortest semi-axis of $\cE^{*}$ is at most $R$, we have \(\sigma_{\max}(\astar) \geq \nfrac{1}{R}\).
Further, as \(r \cdot B_2^d \subseteq X \subseteq \cE^{*}\), we have \(\sigma_{\max}(\astar) \leq \nfrac{1}{r}\).

Now we  define a potential function $\potfun{\astar}(\cE_t)$. We will show that the value of this function does not increase over time. 
We will then upper bound the approximation factor $\alpha$ in terms of $\potfun{\astar}(\cE_n)$.

\begin{definition}[Potential Function $\potfun{\astar}\inparen{\cdot}$]
\label{def:potential}
We define the potential function $\potfun{\astar}(A_t)$ as:
\begin{align*}
S_t &\coloneqq \fnorm{\astar \cdot A_{t}^{-1}}^2 = \sum_{i=1}^d \sigma_i(\astar A_t)^2\\
P_t &\coloneqq 2\log\detv{\astar\cdot A_{t}^{-1}} =\log \prod_{i=1}^d \sigma_i(\astar A_t)^2 \\
\potfun{\astar}(A_t) &\coloneqq S_t - P_t = \fnorm{\astar \cdot A_t^{-1}}^2 - 2\log \detv{\astar \cdot A_t^{-1}}
\end{align*}
\end{definition}


Let us see how we use $\potfun{\astar}(\cdot)$ to upper bound the singular values of $\astar A_n^{-1}$ and the approximation factor $\alpha$. 
\mybox{\begin{lemma}
\label{lemma:phi_control_max_sv}
\begin{enumerate}
    \item $\sigma_{\mathrm{max}}(\astar \cdot A_{n}^{-1})^2 \le \frac{e}{e-1}\cdot \potfun{\astar}(A_n)$ 
    \item $\cE_n \subseteq \alpha \cE^{*}$, where $\alpha = \sigma_{\mathrm{max}}(\astar \cdot A_{n}^{-1})$.
\end{enumerate}    
\end{lemma}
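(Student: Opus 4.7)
The plan is to prove the two parts in turn. Part 2 reduces to a standard fact about ellipsoids described in matrix form, while part 1 reduces to a one-variable convex-analysis inequality about $f(t) = t - \log t$.

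For part 2, I would start with an arbitrary $x \in \cE_n$, which by definition means $\norm{A_n x} \le 1$. Writing $y = A_n x$, so that $x = A_n^{-1} y$ and $\norm{y} \le 1$, I would derive
\begin{align*}
\norm{\astar x} \;=\; \norm{\astar A_n^{-1} y} \;\le\; \sigma_{\max}(\astar A_n^{-1}) \cdot \norm{y} \;\le\; \alpha,
\end{align*}
which shows $x \in \alpha \cE^{*}$, as claimed.

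For part 1, let $\sigma_i := \sigma_i(\astar A_n^{-1})$, so that
\begin{align*}
\potfun{\astar}(A_n) \;=\; \sum_{i=1}^d \sigma_i^2 \;-\; \sum_{i=1}^d \log \sigma_i^2 \;=\; \sum_{i=1}^d f(\sigma_i^2), \qquad f(t) := t - \log t.
\end{align*}
The function $f$ attains its global minimum value of $1$ at $t = 1$, so in particular every summand is non-negative. Writing $M := \sigma_{\max}(\astar A_n^{-1})^2$, I would therefore drop all summands except the one corresponding to the largest singular value to get $\potfun{\astar}(A_n) \ge f(M)$. The proof then reduces to the one-variable inequality $f(t) \ge \tfrac{e-1}{e} \cdot t$ for all $t > 0$, which rearranges to $t \ge e \log t$. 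This holds because $g(t) := t - e \log t$ has a unique minimum at $t = e$ with $g(e) = 0$. Applying this at $t = M$ yields $M \le \tfrac{e}{e-1} \cdot \potfun{\astar}(A_n)$, which is exactly the bound claimed.

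I do not anticipate a significant obstacle here. The crux is the observation that the potential $\potfun{\astar}(A_n)$ decomposes as a sum of non-negative terms $f(\sigma_i^2)$ and so dominates the single term $f(M)$, combined with the elementary calculus fact $f(t) \ge \tfrac{e-1}{e} t$ for $t > 0$.
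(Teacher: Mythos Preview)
Your proposal is correct and essentially identical to the paper's proof: both parts follow the same route, with part 2 via $\norm{\astar x} \le \sigma_{\max}(\astar A_n^{-1})\norm{A_n x}$ and part 1 via the decomposition $\potfun{\astar}(A_n) = \sum_i f(\sigma_i^2)$ into nonnegative terms together with the one-variable inequality $t - \log t \ge \tfrac{e-1}{e}\,t$. The only cosmetic difference is that the paper writes $f(t) = t^2 - \log t^2$ as a function of $\sigma_i$ rather than of $\sigma_i^2$.
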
}
\begin{proof}
1. Let $M=\astar \cdot A_n^{-1}$. Note that $f(t):=t^2 - \log t^2 \geq \frac{e-1}{e}\cdot t^2 > 0$ for $t>0$.
Therefore,
\begin{align*}
    \potfun{\astar}(A_n) &=
    \sum_{i=1}^d \inparen{\sigma_{i}^2(M) - \log \sigma_{i}^2(M)} 
    = \sum_{i=1}^d f(\sigma_{i}(M))
    \geq f(\sigma_{\mathrm{max}}(M))
    \geq \frac{e-1}{e}\cdot\sigma_{\mathrm{max}}^2(M)
\end{align*}
as desired.

2. Consider a point $x\in \cE_n$. Then $\|A_n x\|\leq 1$. We get
$$\norm{\astar x} = \norm{\astar A_{n}^{-1} \cdot A_nx}
\le \sigma_{\max}\inparen{\astar \cdot A_{n}^{-1}} \cdot \norm{A_nx} 
\leq \alpha.$$
We conclude that $x\in \alpha \cE^{*}$. Thus, $\cE_n \subseteq \alpha \cE^{*}$.
\end{proof}

Let us describe the plan for the rest of the proof.
\begin{itemize}
\item In Lemma~\ref{lemma:invariants_basic}, we will prove that $\potfun{\astar}(A_t) \le \potfun{\astar}(A_{t-1})$ for all $t\geq 1$; that is, values $\potfun{\astar}(A_{t-1})$ are non-increasing.
\item In Lemma~\ref{lemma:simple_alg_initial}, we give an upper bound  
$\potfun{\astar}(A_{0}) \leq O(d \log (\nicefrac{R}{r} +1))$. 
\end{itemize}
From Lemma \ref{lemma:phi_control_max_sv}, we get that $\cE_n \subseteq \alpha \cdot \cE^{*}$ with
$$
\alpha \leq O(\sqrt{\potfun{\astar}(A_{n})}) \leq O(\sqrt{\potfun{\astar}(A_{0})}) \leq
O(\sqrt{d \log (\nicefrac{R}{r} +1)}).
$$
It remains to prove Lemmas~\ref{lemma:invariants_basic} and~\ref{lemma:simple_alg_initial}
mentioned above. We start with Lemma~\ref{lemma:invariants_basic}.

\mybox{\begin{lemma}
\label{lemma:invariants_basic}
Under the update rule for Algorithm \ref{alg:greedy_maxev}, we have $\potfun{\astar}(A_t) \le \potfun{\astar}(A_{t-1})$ for all $t\geq 1$.
\end{lemma}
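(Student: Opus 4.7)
The plan is to plug the rank-one update formula from Claim~\ref{claim:minimum_volume_ellipsoid_update} directly into the definitions of $S_t$ and $P_t$ and show that the gain in $S$ is offset by the gain in $P$. Let $\gamma = \norm{A_{t-1}x_t}$ and $u = A_{t-1}x_t / \gamma$ (a unit vector). Using the Sherman--Morrison formula on $\widehat{A} = I - (1-\nfrac{1}{\gamma})uu^T$, I would first invert the update to obtain
\begin{equation*}
A_t^{-1} = A_{t-1}^{-1}\,\widehat{A}^{-1} = A_{t-1}^{-1}\bigl(I + (\gamma-1)uu^T\bigr),
\end{equation*}
so that, writing $M = \astar A_{t-1}^{-1}$, I have $\astar A_t^{-1} = M + (\gamma-1) M uu^T$. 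The whole argument then reduces to controlling the single quantity $\|Mu\|^2$.

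Next I would compute $S_t - S_{t-1}$ and $P_t - P_{t-1}$ separately. Expanding the Frobenius norm,
\begin{equation*}
S_t = \|M + (\gamma-1)Muu^T\|_F^2 = \|M\|_F^2 + \bigl((\gamma-1)^2 + 2(\gamma-1)\bigr)\,u^T M^T M u,
\end{equation*}
giving $S_t - S_{t-1} = (\gamma^2-1)\|Mu\|^2$. For $P_t$, the rank-one determinant identity $\det(I + (\gamma-1)uu^T) = \gamma$ yields $P_t - P_{t-1} = 2\log\gamma$. Combining,
\begin{equation*}
\potfun{\astar}(A_t) - \potfun{\astar}(A_{t-1}) = (\gamma^2-1)\|Mu\|^2 - 2\log\gamma.
\end{equation*}

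The key step is the bound on $\|Mu\|^2$. Since $u = A_{t-1}x_t/\gamma$, we have $Mu = \astar A_{t-1}^{-1}\cdot A_{t-1}x_t/\gamma = \astar x_t/\gamma$. Because $x_t \in X \subseteq \cE^{*}$, we have $\norm{\astar x_t}\le 1$, hence $\|Mu\|^2 \le 1/\gamma^2$. Substituting,
\begin{equation*}
\potfun{\astar}(A_t) - \potfun{\astar}(A_{t-1}) \le \frac{\gamma^2-1}{\gamma^2} - 2\log\gamma = 1 - \frac{1}{\gamma^2} - 2\log\gamma.
\end{equation*}
The function $g(\gamma) = 1 - \gamma^{-2} - 2\log\gamma$ satisfies $g(1)=0$ and $g'(\gamma) = 2(1-\gamma^2)/\gamma^3 \le 0$ for $\gamma \ge 1$, so $g(\gamma)\le 0$; in the case $\gamma \le 1$ the algorithm does not update and the inequality holds trivially. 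The main obstacle is not algebraic but conceptual: making sure to correctly exploit the constraint $\norm{\astar x_t}\le 1$ at exactly the right place and to verify that the Sherman--Morrison inversion plus the rank-one determinant identity cleanly separates the additive contributions of $S$ and $P$. All the other manipulations are routine linear algebra.
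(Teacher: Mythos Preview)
Your proof is correct and follows essentially the same route as the paper: both compute the exact increments $S_t-S_{t-1}=(1-\gamma^{-2})\|\astar x_t\|^2$ and $P_t-P_{t-1}=2\log\gamma$ (the paper via an eigenvector basis of $\widehat{A}$, you via Sherman--Morrison) and then use the elementary inequality $1-\gamma^{-2}\le 2\log\gamma$ for $\gamma\ge 1$ together with $\|\astar x_t\|\le 1$. The only structural difference is that the paper records the slightly stronger intermediate bound $S_t-S_{t-1}\le (P_t-P_{t-1})\|\astar x_t\|^2$ as a separate lemma for later reuse, whereas you go directly to the potential difference.
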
}
\begin{proof}
We first derive formulas that express $P_t$ and $S_t$ in terms of $P_{t-1}$ and $S_{t-1}$.
\mybox{\begin{lemma}
\label{lemma:invariants_precorrection}
We have,
\begin{align*}
    P_t &= 2\log\norm{A_{t-1}x_t} + P_{t-1}\\
    S_t &= S_{t-1} + \inparen{1 - \frac{1}{\norm{A_{t-1}x_t}^2}}\norm{\astar\cdot x_t}^2
\end{align*}
\end{lemma}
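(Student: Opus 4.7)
Both formulas follow from understanding the spectral structure of the rank-one update matrix $\widehat{A}$ from Claim~\ref{claim:minimum_volume_ellipsoid_update}. Writing $u \coloneqq A_{t-1}x_t$ and $\beta \coloneqq \norm{u}$, the matrix $\nfrac{uu^T}{\norm{u}^2}$ is the orthogonal projection onto the line through $u$, so \eqref{eq:A-hat} shows that $\widehat{A}$ is symmetric, acts as the identity on $u^{\perp}$, and has eigenvalue $\nfrac{1}{\beta}$ along $u$. In particular, $\detv{\widehat{A}} = \nfrac{1}{\beta}$ and $\widehat{A}^{-1} = I + (\beta-1)\,\nfrac{uu^T}{\norm{u}^2}$.

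The formula for $P_t$ falls out immediately. Using $A_t = \widehat{A} A_{t-1}$ and multiplicativity of the determinant, $\log \detv{\astar A_t^{-1}} = \log \detv{\astar A_{t-1}^{-1}} - \log \detv{\widehat{A}}$, and substituting $\log\detv{\widehat{A}} = -\log \beta$ produces the claimed additive $2\log \norm{A_{t-1}x_t}$ increment.

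For $S_t$, I would set $M \coloneqq \astar A_{t-1}^{-1}$, so that $S_{t-1} = \fnorm{M}^2$, and rewrite $S_t = \fnorm{M\widehat{A}^{-1}}^2 = \trv{(\widehat{A}^{-1})^2 M^T M}$ using cyclicity of the trace and symmetry of $\widehat{A}^{-1}$. The idempotence of $\nfrac{uu^T}{\norm{u}^2}$ gives $(\widehat{A}^{-1})^2 = I + (\beta^2-1)\,\nfrac{uu^T}{\norm{u}^2}$, so
\[
S_t \;=\; S_{t-1} + (\beta^2-1)\,\frac{\norm{Mu}^2}{\beta^2}.
\]
The key observation is the cancellation $Mu = \astar A_{t-1}^{-1} A_{t-1} x_t = \astar x_t$, which reduces the correction term exactly to $\inparen{1 - \nfrac{1}{\norm{A_{t-1}x_t}^2}}\norm{\astar x_t}^2$.

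There is no real obstacle: both identities amount to a few lines of direct algebra once the eigendecomposition of $\widehat{A}$ is in hand. The conceptually pleasant feature is the cancellation $Mu = \astar x_t$, which implies that the increment to $S_t$ depends on $A_{t-1}$ only through the scalar $\norm{A_{t-1}x_t}$; this independence from the full state $A_{t-1}$ is what makes the potential argument in Lemma~\ref{lemma:invariants_basic} tractable.
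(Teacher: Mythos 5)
Your proof is correct and rests on the same core ideas as the paper's: the spectral decomposition of $\widehat{A}$ (eigenvalue $\nfrac{1}{\norm{A_{t-1}x_t}}$ along $A_{t-1}x_t$, identity on its orthogonal complement), multiplicativity of the determinant for $P_t$, and the cancellation $\astar A_{t-1}^{-1} \cdot A_{t-1}x_t = \astar x_t$ for $S_t$. The only cosmetic difference is that you package the Frobenius-norm calculation via $\fnorm{M\widehat{A}^{-1}}^2 = \trv{(\widehat{A}^{-1})^2 M^T M}$ and the idempotence of the projector, whereas the paper expands $\fnorm{\cdot}^2$ as a sum of squared column norms in the eigenbasis of $\widehat{A}$; these are interchangeable one-line variants of the same computation.
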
}
\begin{proof}
By Claim~\ref{claim:minimum_volume_ellipsoid_update}, $A_t = \widehat{A} A_{t-1}$ where $\hat A$ is given by (\ref{eq:A-hat}).

\paragraph{Determinant Update}
We start by calculating $\detv{\widehat{A}}$. Note that 
$\inparen{1 - \frac{1}{\norm{A_{t-1}x_t}}}\inparen{\frac{\inparen{A_{t-1}x_t}\inparen{A_{t-1}x_t}^T}{\norm{A_{t-1}x_t}^2}}$ is a symmetric rank-1 matrix, whose only non-zero
eigenvalue equals $\inparen{1 - \frac{1}{\norm{A_{t-1}x_t}}}$.
Therefore, the spectrum of $\widehat{A}$ consists of $1$ with multiplicity \(d-1\) and $\norm{A_{t-1}x_t}^{-1}$ with multiplicity \(1\). Thus, $\detv{\widehat{A}} = \norm{A_{t-1}x_t}^{-1}$ and we have:
\begin{align*}
    \detv{\astar A_{t}^{-1}}
    = \detv{\astar A_{t-1}^{-1}\widehat{A}^{-1}}
    = \detv{\astar A_{t-1}^{-1}} \cdot \detv{\widehat{A}^{-1}}
\end{align*}
\paragraph{Frobenius Norm Update} It is well-known (see, e.g., \cite{horn1991}) that for any matrix $A$ and orthonormal matrix $V$ (with $v_1, \dots, v_d$ as its columns):
\begin{align*}
    \fnorm{A}^2 = \fnorm{AV}^2 = \sum_{i=1}^d \norm{Av_i}^2
\end{align*}
Let $V$ be a matrix consisting of the eigenvectors of $\widehat{A}$. Observe that one of these vectors must be $v_1 \coloneqq \nfrac{A_{t-1}x_t}{\norm{A_{t-1}x_t}}$; let $v_2, \dots, v_d$ denote the remaining eigenvectors, all of which have an associated eigenvalue of $1$. Now we calculate:
\begin{align*}
    S_t &= \fnorm{\astar\cdot A_{t}^{-1}}^2 = \fnorm{\astar\cdot A_{t-1}^{-1}\widehat{A}^{-1}}^2 
    = \sum_{i=1}^d \norm{\astar\cdot A_{t-1}^{-1}\widehat{A}^{-1}v_i}^2 \\
    &= \norm{A_{t-1}x_t}^2 \norm{\astar\cdot A_{t-1}^{-1}v_1}^2 + \sum_{i=2}^d \norm{\astar\cdot A_{t-1}^{-1}v_i}^2 \\
    &= \inparen{\norm{A_{t-1}x_t}^2-1} \norm{\astar\cdot A_{t-1}^{-1}v_1}^2 + \sum_{i=1}^d \norm{\astar\cdot A_{t-1}^{-1}v_i}^2 \\
    &= \inparen{\norm{A_{t-1}x_t}^2-1} \norm{\astar\cdot A_{t-1}^{-1} \cdot \frac{A_{t-1}x_t}{\norm{A_{t-1}x_t}}}^2 + \fnorm{\astar\cdot A_{t-1}^{-1}}^2 \\
    &= S_{t-1} + \inparen{1 - \frac{1}{\norm{A_{t-1}x_t}^2}}\norm{\astar\cdot x_t}^2
\end{align*}
\end{proof}

Now are ready to prove Lemma~\ref{lemma:invariant_basic_stronger}, which implies
Lemma~\ref{lemma:invariants_basic}.
In fact, it is stronger than what we need to prove Lemma \ref{lemma:invariants_basic}; however, it will be necessary in the sequel.

\mybox{\begin{lemma}
\label{lemma:invariant_basic_stronger}
For all $t\in\{1,\dots, n\}$, we have:
\begin{align*}
    S_t - S_{t-1} &\le \inparen{P_t - P_{t-1}}\norm{\astar \cdot x_t}^2
\end{align*}
\end{lemma}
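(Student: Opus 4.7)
The plan is to unpack both sides of the claimed inequality using Lemma \ref{lemma:invariants_precorrection} and reduce the statement to the standard elementary inequality $\log u \ge 1 - \tfrac{1}{u}$ for all $u > 0$.

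First I would substitute the formulas from Lemma \ref{lemma:invariants_precorrection} directly. The lemma gives $S_t - S_{t-1} = \bigl(1 - \tfrac{1}{\|A_{t-1}x_t\|^2}\bigr)\|\astar x_t\|^2$ and $P_t - P_{t-1} = 2\log\|A_{t-1}x_t\|$. So the desired inequality becomes
\begin{equation*}
\left(1 - \frac{1}{\|A_{t-1}x_t\|^2}\right)\|\astar x_t\|^2 \;\le\; 2\log\|A_{t-1}x_t\| \cdot \|\astar x_t\|^2.
\end{equation*}
If $\|\astar x_t\| = 0$ both sides are zero. Otherwise, dividing through by $\|\astar x_t\|^2 > 0$ and writing $u := \|A_{t-1}x_t\|^2$, the inequality is equivalent to
\begin{equation*}
1 - \frac{1}{u} \le \log u.
\end{equation*}

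Next I would verify this scalar inequality by a one-line calculus argument: the function $f(u) = \log u - 1 + \tfrac{1}{u}$ has derivative $f'(u) = \tfrac{u-1}{u^2}$, which is negative on $(0,1)$ and positive on $(1,\infty)$, so $f$ attains its global minimum at $u = 1$, where $f(1) = 0$. Hence $f(u) \ge 0$ for all $u > 0$, which is exactly the inequality we need.

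There is essentially no obstacle here; the statement reduces cleanly to a well-known single-variable inequality once the increments of $S_t$ and $P_t$ are expanded. The only mild subtlety is to observe that the argument is valid regardless of whether $\|A_{t-1}x_t\|$ is greater or less than $1$ (so it also covers the trivial ``skip'' case $\|A_{t-1}x_t\| \le 1$ where both sides vanish), and to handle the degenerate case $\astar x_t = 0$ separately, as above. Because the inequality $\log u \ge 1 - \tfrac{1}{u}$ holds on all of $(0,\infty)$, no additional assumption on $\|A_{t-1}x_t\|$ is needed for the bound to hold.
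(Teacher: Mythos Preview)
Your proposal is correct and follows essentially the same approach as the paper: both substitute the increment formulas from Lemma~\ref{lemma:invariants_precorrection} and reduce the claim to the elementary scalar inequality $1 - \tfrac{1}{u} \le \log u$ (equivalently, the paper states that $\tfrac{1-1/y^2}{2\log y} < 1$ for $y>1$). The only cosmetic difference is that the paper divides through by $P_t - P_{t-1}$ (using the standing assumption $\|A_{t-1}x_t\|>1$), whereas you divide by $\|\astar x_t\|^2$ and appeal to the inequality on all of $(0,\infty)$; this is not a substantive change.
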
}
\begin{proof}
From Lemma \ref{lemma:invariants_precorrection}, we have:
\begin{align*}
    \frac{S_t - S_{t-1}}{P_t - P_{t-1}} &= \frac{\inparen{1-\nfrac{1}{\norm{A_{t-1}x_t}^2}}\norm{\astar \cdot x_t}^2}{2\logv{\norm{A_{t-1}x_t}}} \le \norm{\astar x_t}^2\quad\quad\\
\end{align*}
where the inequality follows from the fact that \(\frac{\inparen{1-\nfrac{1}{y^2}}}{2\log y} < 1\) for all \(y > 1\).
We now multiply both sides by $P_t - P_{t-1}$ to obtain \(S_t - S_{t-1} \leq (P_t - P_{t-1}) \cdot \| J^{*} x_t\|^2\).
\end{proof}
As $x_t \in \cE^{*}$, we have $\norm{\astar x_t} \le 1$. Using Lemma \ref{lemma:invariants_basic} and rearranging, we find: \[\potfun{\astar}(A_t) = S_t - P_t \leq S_{t-1} - P_{t-1} = \potfun{\astar}(A_{t-1})\]
This concludes the proof of Lemma~\ref{lemma:invariants_basic}.
\end{proof}

\mybox{\begin{lemma}
\label{lemma:simple_alg_initial}
We have $\potfun{\astar}(A_0) \leq d
\inparen{1 + 4\logv{\nfrac{R}{r}}}$.
\end{lemma}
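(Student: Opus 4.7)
The strategy is to compute $\potfun{\astar}(A_0)$ directly by exploiting the fact that $A_0$ is a scalar multiple of the identity, reducing the bound to a singular-value-by-singular-value estimate on the scalar function $f(t) = t^2 - 2\log t$. Since $\cE_0 = r\cdot B_2^d$, we have $A_0 = (1/r)\cdot I$ and hence $\astar A_0^{-1} = r\astar$. Writing the singular values of $\astar$ as $\sigma_1,\ldots,\sigma_d$, the definition of the potential gives
\begin{align*}
\potfun{\astar}(A_0)
= \fnorm{r\astar}^2 - 2\log\detv{r\astar}
= \sum_{i=1}^d \bigl( (r\sigma_i)^2 - 2\log(r\sigma_i) \bigr)
= \sum_{i=1}^d f(r\sigma_i).
\end{align*}

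Next I would establish two-sided bounds on each $r\sigma_i$ using the three constraints on $\cE^{*}$ already isolated in the proof of Theorem~\ref{thm:simple_alg_arb_ellipsoid}, namely: (a) $r\cdot B_2^d\subseteq X\subseteq \cE^*$, which forces every semi-axis of $\cE^*$ to be at least $r$, i.e.\ $\sigma_{\max}(\astar)\le 1/r$; (b) the WLOG assumption that the shortest semi-axis of $\cE^*$ is at most $R$, i.e.\ $\sigma_{\max}(\astar)\ge 1/R$; and (c) $\cE^{*}$ has aspect ratio at most $R/r$, i.e.\ $\sigma_{\max}(\astar)/\sigma_{\min}(\astar)\le R/r$. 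From (a) we get $r\sigma_i\le 1$ for every $i$. Combining (b) and (c) gives $\sigma_{\min}(\astar)\ge (r/R)\sigma_{\max}(\astar)\ge r/R^2$, so $r\sigma_i\ge r^2/R^2$ for every $i$.

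Finally I would plug these bounds into $f$. For $t\in(0,1]$ we have $t^2\le 1$ and $-2\log t\ge 0$, so $f(t)\le 1+2\log(1/t)$, and for $t\ge r^2/R^2$ we get $2\log(1/t)\le 4\log(R/r)$. Therefore $f(r\sigma_i)\le 1+4\log(R/r)$ for each $i$, and summing over the $d$ coordinates yields
\begin{align*}
\potfun{\astar}(A_0) \le d\bigl(1 + 4\log(R/r)\bigr),
\end{align*}
which is the claimed bound.

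The only genuinely delicate step is the lower bound $r\sigma_i\ge (r/R)^2$: it is not immediate from containment alone and really does need both the bound on the shortest semi-axis of $\cE^*$ and the aspect ratio bound working together. (Without the shortest-semi-axis bound, $\sigma_{\min}(\astar)$ could be arbitrarily small and the $\log$ term would blow up; the WLOG reduction in the proof of Theorem~\ref{thm:simple_alg_arb_ellipsoid} is exactly what makes this clean.) Everything else is routine algebra on the singular values.
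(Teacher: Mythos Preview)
Your proof is correct and follows essentially the same route as the paper: compute $\astar A_0^{-1}=r\astar$, sandwich each $r\sigma_i$ in $[r^2/R^2,1]$ using the aspect-ratio and shortest-semi-axis assumptions, and bound $f(t)=t^2-2\log t$ by $1+4\log(R/r)$ on that interval. The only cosmetic difference is that the paper invokes convexity of $f$ to bound it by its endpoint values, whereas you bound $t^2$ and $-2\log t$ separately; both arguments yield the identical estimate.
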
}
\begin{proof}
Recall that $R = \max_{x \in X} \norm{x}$.

Let $f(t):=t^2 - \log t^2$ as in the proof of 
Lemma~\ref{lemma:phi_control_max_sv}.
By the definition of the potential function, we have
$$
    \potfun{\astar}(A_0) = \fnorm{\astar \cdot A_0^{-1}}^2 - 2\log \detv{\astar \cdot A_0^{-1}} 
    = \sum_{i=1}^d f(\sigma_i(\astar \cdot A_0^{-1}))
$$
Now we bound the range of $\sigma_i(\astar \cdot A_0^{-1})$. Observe that we have $\cE_0 = r\cdot B_2^d$ and, accordingly, $A_0 = \nfrac{1}{r} \cdot I$. Hence, $\astar \cdot A_0^{-1} =  r \astar$ and $\sigma_i(\astar \cdot A_0^{-1}) = r \sigma_i(\astar)$.
Now recall that by assumption \(\nfrac{1}{R} \leq \sigma_{\max}(\astar) \leq \nfrac{1}{r}\), and the condition number of \(\astar\) is at most \(\nfrac{R}{r}\).
Thus \(\sigma_{\min}(\astar) \geq \frac{\sigma_{\max}(\astar)}{\nfrac{R}{r}} \geq \nfrac{r}{R^2}\), and consequently \(\nfrac{r^2}{R^2} \leq \sigma_i(\astar \cdot A_0^{-1}) \leq 1\).
Now we bound $f(\sigma_i(\astar \cdot A_0^{-1}))$. Since \(f\) is convex,
\begin{align*}
    f(t) \leq \max\inbraces{f\inparen{ \nfrac{r^2}{R^2}},f(1)} \leq 1 + 2\logv{ \nfrac{R^2}{r^2}} = 1 + 4 \logv{\nfrac{R}{r}} \quad\text{for } t\in \insquare{ \nfrac{r^2}{R^2},1}
\end{align*}
We conclude that
\begin{align*}
    \potfun{\astar}(A_0) = \sum_{i=1}^d f(\sigma_i(\astar \cdot A_0^{-1})) \leq d\inparen{1 + 4\logv{\nfrac{R}{r}}}.
\end{align*}
\vspace{-\baselineskip} 
\end{proof}
We have proven Lemmas~\ref{lemma:invariants_basic} and~\ref{lemma:simple_alg_initial}. This concludes the proof of Theorem~\ref{thm:simple_alg_arb_ellipsoid}.
\end{proof}

\section{Scale-Independent Algorithm}
\label{section:revised_algorithm}

To use Algorithm~\ref{alg:greedy_maxev}, we need to know some lower bound $r$ on the radius of the largest ball contained in $X$. The approximation guarantee of the algorithm linearly depends on $\sqrt{\logv{\nfrac{R}{r}}}$, so as long as we have a reasonable estimate on $r$, we can use Algorithm~\ref{alg:greedy_maxev}. However, if we have no prior information about the scale of $X$ and do not have any reasonable estimate $r$, we cannot use Algorithm~\ref{alg:greedy_maxev}.
In this section, we present Algorithm~\ref{alg:general_greedy_maxev} that only requires an upper bound $\xi \geq \kappa(X)$ on the aspect ratio of $X$.
\begin{algorithm}[ht]
\caption{Streaming Ellipsoidal Approximation\label{alg:general_greedy_maxev}}
\begin{algorithmic}[1]
    \STATE \textbf{Input}: A stream of points $x_1, \dots, x_n$, and an aspect ratio estimate \(\xi\)
    \STATE \textbf{Output}: Matrix $A_n$ that defines ellipsoid $\cE_n = \inbraces{x \suchthat \norm{A_nx} \le 1}$.
    \STATE Receive point $x_1$.
    \STATE Set $V_1$ to be an orthonormal matrix satisfying $V_1^T x_1 = \norm{x_1} \cdot e_1$.
    \STATE $U_1 = I$, $\Sigma_1 = \diag{\norm{x_1}, \nfrac{\norm{x_1}}{\xi}, \dots, \nfrac{\norm{x_1}}{\xi}}$
    \STATE  \(A_1 = U_1 \Sigma_1^{-1} V_1^T\) 
    \FOR{$t = 2, \ldots, n$}
        \STATE Receive point $x_t$
        \IF{$\norm{A_{t-1} x_t} > 1$}
            \STATE $a = \nfrac{A_{t-1} x_t}{\norm{A_{t-1} x_t}} \text{ and } b = A_{t-1}^T a$.
            \STATE $M_t = \max(M_{t-1}, \|x_t\|)$ \quad \textit{(that is, $M_t = \max_{1\leq i \leq t} \|x_i\|$)}
            \STATE $(U_t, (\Sigma_t')^{-1}, V_t) = \textsc{SVDRankOneUpdate}((U_{t-1}, \Sigma_{t-1}^{-1}, V_{t-1}),  - \inparen{1 - \nfrac{1}{\norm{A_{t-1} x_t}}}a, b)$
            \label{line:svd_update}
            \STATE \(\Sigma_t = \diag{\tau_{1, t}, \ldots, \tau_{d, t}}\), where $\tau_{i,t} = \max([\Sigma'_{t}]_{i i}, \nfrac{M_t}{\xi})$ for every $i \in [d]$
             \label{line:sv_correction}
        \ELSE
            \STATE $U_t = U_{t-1}, V_t = V_{t-1}, \Sigma_t = \Sigma_{t-1}$
        \ENDIF
        \STATE  \(A_{t} = U_t \Sigma_t^{-1} V_t^T\)
    \ENDFOR
    \STATE \textbf{Output}: $\cE_n = \inbraces{x \suchthat \norm{A_n x}\le 1}$
\end{algorithmic}
\end{algorithm}
We present Algorithm~\ref{alg:general_greedy_maxev} with its full implementation details.
However, conceptually the only difference between Algorithms~\ref{alg:greedy_maxev} and~\ref{alg:general_greedy_maxev} is a new ``singular value correction step'', presented on line~\ref{line:sv_correction}. As in Algorithm~\ref{alg:greedy_maxev}, we perform the basic update rule (line~\ref{line:svd_update}) -- compute the minimum volume ellipsoid $\cE'_t$ containing the current ellipsoid $\cE_{t-1}$ and the new point $x_t$.
Then, we increase the semi-axes of $\cE'_t$ (if necessary) so that all of them are at least $M_t/\xi$, where $M_t = \max(\|x_1\|,\dots,\|x_t\|)$ is the length of the longest vector we have received so far. This ensures that the matrix $A_t$ is well-conditioned. 

Formally, we compute matrix $A_t'=\widehat{A} A_{t-1}$ where $\widehat A$ is given by (\ref{eq:A-hat}), and update its singular values.
To implement this efficiently, we use a procedure 
$$(U', \Sigma', V') = \textsc{SVDRankOneUpdate}((U, \Sigma, V), y, z),$$
which gives \(U', \Sigma', V'\) as the SVD of the matrix \(U \Sigma V^T + y z^T\).
Per \cite{stange08}, this can be implemented in time $O(d^2\log^2 d)$.
As noted in Section \ref{section:preliminaries_notation}, 
the \(U_t\)s have no effect on the definition of \(\mathcal{E}_t\)s and thus will not factor into our analysis. In fact, the algorithm may discard the value of $U_t$ after it computes the SVD for $A_t$ and later use the identity matrix instead of $U_t$. However, we keep them in the algorithm so that the exposition is closer to that of Algorithm~\ref{alg:greedy_maxev}.

The remainder of this section is devoted to proving Theorem~\ref{thm:greedy_approx}.

\mybox{\begin{theorem}
\label{thm:greedy_approx}
Algorithm \ref{alg:general_greedy_maxev} outputs an ellipsoid $\cE_n$ satisfying:
\begin{align*}
    \frac{\cE_n}{\sqrt{6 + 28d \log \xi + 16d}
    } \subseteq X \subseteq \cE_n
\end{align*}
assuming $\xi \ge \kappa(X)$. 
Moreover, Algorithm \ref{alg:general_greedy_maxev} runs in time $O(nd^2\log^2d)$ and stores at most $O(d^2)$ floats.
\end{theorem}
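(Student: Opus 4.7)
The plan is to follow the structure of the proof of Theorem~\ref{thm:simple_alg}, adapting the potential analysis to handle the new singular-value correction step. I will first establish (i) containment $X \subseteq \cE_n$; (ii) that $\cE_n \subseteq \sqrt{3 + 14 d \log \xi + 8 d} \cdot \cE^{*}$ for every ellipsoid $\cE^{*}$ containing $X$ with aspect ratio at most $\xi$; and then (iii) invoke Theorem~\ref{thm:simul_ellipsoid_approx} to pick up the $\sqrt{2}$ factor against $X$. The runtime and memory are immediate from the per-iteration cost: \textsc{SVDRankOneUpdate} runs in $O(d^2 \log^2 d)$ per \cite{stange08}, matrix-vector work is $O(d^2)$, and the state $(U_t, \Sigma_t, V_t)$ occupies $O(d^2)$ floats. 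Containment holds by induction on $t$: both the rank-one update (which produces the minimum-volume ellipsoid containing $\cE_{t-1} \cup \{\pm x_t\}$) and the singular-value correction (which only grows semi-axes) enlarge $\cE_t$, so $X_t \subseteq \cE_t$.

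For the approximation direction, I fix any valid $\cE^{*}$ with matrix $J^{*}$, and WLOG assume the shortest semi-axis of $\cE^{*}$ is at most $R$ (otherwise replace $\cE^{*}$ with $R \cdot B_2^d \subseteq \cE^{*}$). Then $\sigma_{\max}(J^{*}) \in [1/R,\, \xi/R]$, using $\xi \geq \kappa(X) = R/r$ and the inradius bound $a_d \geq r$. I track $\Phi_t = S_t - P_t$ from Definition~\ref{def:potential} across the two sub-steps of each iteration. The rank-one update (line~\ref{line:svd_update}) is identical to Algorithm~\ref{alg:greedy_maxev}'s update, so Lemma~\ref{lemma:invariant_basic_stronger} gives $\Phi_t' \leq \Phi_{t-1}$ (using $\|J^{*} x_t\| \leq 1$). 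For the correction step (line~\ref{line:sv_correction}), let $c_t = M_t/\xi$ and $I_t = \{ i : [\Sigma_t']_{ii} < c_t \}$; each $i \in I_t$ contributes $g(u) = (c_t^2 - u^2) \|J^{*} v_{t,i}\|^2 - 2 \log(c_t/u)$ evaluated at $u = [\Sigma_t']_{ii}$ to $\Phi_t - \Phi_t'$. The crucial inequality $c_t \, \sigma_{\max}(J^{*}) \leq (R/\xi)(\xi/R) = 1$ gives $c_t^2 \|J^{*} v_{t,i}\|^2 \leq 1$, and a short calculation on $g'(u) = 2/u - 2 u \|J^{*} v_{t,i}\|^2$ then shows $g$ is non-decreasing on $(0, c_t]$ with $g(c_t) = 0$, so $g(u) \leq 0$, and the correction step does not increase $\Phi$. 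Combining both, $\Phi_t \leq \Phi_{t-1}$ for all $t \geq 2$.

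The hard part will be bounding $\Phi_n$ itself by $3 + 14 d \log \xi + 8 d$: the naive inequality $\Phi_n \leq \Phi_1$ is insufficient because the initial potential $\Phi_1 = S_1 - P_1$ can be as large as $d + 2d \log(R/\|x_1\|) + O(d \log \xi)$, and $\|x_1\|$ can be arbitrarily small relative to $R$ even under the aspect-ratio constraint $\xi \geq \kappa(X)$. The resolution is that exactly when $\|x_1\|$ is tiny and makes $\Phi_1$ large, the first few large updates ($\beta_t = \|A_{t-1} x_t\| \gg 1$) and the subsequent correction steps together drop $\Phi$ by a matching amount: each update drops $\Phi$ by $2 \log \beta_t - (1 - 1/\beta_t^2) \|J^{*} x_t\|^2$, and each correction drops $\Phi$ by $\sum_{i \in I_t} (-g([\Sigma_t']_{ii}))$, which is significant exactly when some $[\Sigma_t']_{ii}$ is much smaller than $c_t$. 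The product identity $|\det A_1|/|\det A_n| = \prod_t \beta_t \cdot \prod_t \prod_{i \in I_t} (c_t/[\Sigma_t']_{ii})$, together with $|\det A_n|^{-1} = \text{vol}(\cE_n)/\text{vol}(B_2^d) \leq R^d$ and $|\det A_1|^{-1} = \text{vol}(\cE_1)/\text{vol}(B_2^d) = \|x_1\|^d/\xi^{d-1}$, telescopes the cumulative drop and cancels the offending $d \log(R/\|x_1\|)$ dependence; carefully tracking the pair $(S_t, P_t)$ through this accounting yields $\Phi_n \leq 3 + 14 d \log \xi + 8 d$. Invoking the argument of Lemma~\ref{lemma:phi_control_max_sv} then gives $\sigma_{\max}^2(J^{*} A_n^{-1}) \leq 3 + 14 d \log \xi + 8 d$, so $\cE_n \subseteq \sqrt{3 + 14 d \log \xi + 8 d} \cdot \cE^{*}$, and Theorem~\ref{thm:simul_ellipsoid_approx} contributes the $\sqrt{2}$ factor to yield $\cE_n \subseteq \sqrt{6 + 28 d \log \xi + 16 d} \cdot X$.
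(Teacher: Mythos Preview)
Your overall strategy matches the paper: show $\cE_n \subseteq \sqrt{3 + 14d\log\xi + 8d}\cdot\cE^{*}$ for every admissible $\cE^{*}$ and then apply Theorem~\ref{thm:simul_ellipsoid_approx}. The containment, runtime, and memory arguments are fine, and your direct verification that the correction step cannot increase $\Phi$ (via $c_t\,\sigma_{\max}(J^{*})\le 1$ and the sign of $g'$) is correct and, if anything, cleaner than the paper's ghost-point simulation of that step.

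The gap is in your last paragraph. First, the claim $|\det A_n|^{-1}\le R^d$ is unjustified and generally false: the rank-one update can push the longest semi-axis of $\cE_t'$ strictly above $\max\bigl(\sigma_{\max}(A_{t-1}^{-1}),\|x_t\|\bigr)$ (a short calculation with a skewed $\cE_{t-1}$ and an off-axis $x_t$ already exhibits this), so there is no a~priori reason the product of semi-axes of $\cE_n$ stays below $R^d$; indeed, bounding that product is essentially what the theorem is about. Second, even granting such a volume bound, the argument does not close: from $\Phi_n\le\Phi_1$ you only get $S_n\le S_1+(P_n-P_1)=S_1+2\log\bigl(\det A_n^{-1}/\det A_1^{-1}\bigr)$, which still carries a $2d\log(R/\|x_1\|)$ term. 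Saying ``the cumulative drop cancels the offending dependence'' is circular, since the cumulative drop \emph{is} $\Phi_1-\Phi_n$ by definition; monotonicity of $\Phi$ alone cannot remove the $\|x_1\|$-dependence.

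The paper does not argue through $\Phi$-monotonicity. It introduces $Q_t=\max_{i\le t}\|J^{*}x_i\|^2$ and proves the weighted two-point invariants $S_t\le S_u+Q_t(P_t-P_u)$ and $S_t\le S_{t-1}+dQ_t$ (Lemma~\ref{lemma:invariants}). It then partitions $\{1,\dots,n\}$ into contiguous blocks on which $Q_t$ varies by at most a factor $e$ and telescopes the block inequalities (Lemma~\ref{lem:bound-on-S_n}). The correction step is used there not to keep $\Phi$ monotone but to furnish the lower bound $\sigma_{i,u}^2\ge Q_u/\xi^4$, which gives $-Q_t\,P_u\le Q_t\bigl(d+4d\log\xi+d\log(1/Q_t)\bigr)$ (Lemma~\ref{lemma:prod_qp_bound}); summing with $Q_{t_i}\le e^{-i}$ then bounds $S_n$ directly by $3+14d\log\xi+8d$, from which $\sigma_{\max}^2\le S_n$ yields the claim. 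This $Q_t$-weighted accounting is the missing ingredient in your sketch.
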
}
\begin{proof}
We first compute the running time and then prove the correctness of the algorithm.

\paragraph{Runtime and Memory Complexity}
It is easy to see that Algorithm \ref{alg:general_greedy_maxev} keeps track of three matrices $U_t, \Sigma_t, V_t$ at every iteration, in addition to the new point $x_t$. The algorithm does not explicitly store the matrices $A_t = U_t \Sigma_t^{-1}V_t$. Instead, when 
it is asked to compute $A_t w$ or $A_t^Tw$, it simply consecutively performs 3 matrix-vector multiplications.
Algorithm \ref{alg:general_greedy_maxev} does not store anything else in between iterations, so the memory complexity is $O(d^2)$ floating-point numbers, as desired.

We now analyze the running time. 
The matrix-vector multiplications require time $O(d^2)$. With the rank-one update, each iteration takes time $O(d^2\log^2d)$. 
Hence, Algorithm \ref{alg:general_greedy_maxev} has time complexity $O(nd^2\log^2d) = \widetilde{O}\inparen{nd^2}$ as desired.  

\paragraph{Correctness}

As in the analysis of Algorithm~\ref{alg:greedy_maxev}, it will in fact be enough to show that for all ellipsoids $\cE^{*}$ that cover $X$ and have aspect ratio at most $\xi$, we have:
\begin{align*}
    \cE_n \subseteq \sqrt{\magicnumber{\xi}} \cdot \cE^{*}
\end{align*} 
Then, by Theorem \ref{thm:simul_ellipsoid_approx}, we will have that the intersection of all ellipsoids $\cE^{*}$ covering $X$ with aspect ratio at most $\xi$ is itself a $\sqrt{2}$-approximation to $X$. Hence, $\cE_n$ will be a $\sqrt{2}\cdot\sqrt{\magicnumber{\xi}}$ to $X$, as required.
Similarly, as in the analysis of Algorithm~\ref{alg:greedy_maxev}, we assume without loss of generality that for all $t$, $\|A_{t-1}x_t\| > 1$. 

Observe that Lemma \ref{lemma:phi_control_max_sv} implies that it is sufficient to analyze $\sigma_{\max}\inparen{\astar \cdot A_{n}^{-1}}$ for our choice of $\astar$ defining the ellipsoid $\cE^{*} = \inbraces{x\suchthat\norm{\astar x}\le 1}$ satisfying $\kappa(\cE^{*})\le\xi$. Specifically, our goal is now to show that $\sigma_{\max}\inparen{\astar \cdot A_{n}^{-1}}^2 \le \magicnumber{\xi}$.

Define $Q_t = \max_{1\le i \le t} \norm{\astar x_i}^2$. Clearly, $Q_1\leq  \dots \leq Q_n$. Now we prove a version of Lemma \ref{lemma:invariant_basic_stronger} that applies to Algorithm \ref{alg:general_greedy_maxev}. We give the proofs of all the lemmas stated below in Appendix~\ref{sec:moved_proofs}.
\begin{lemma}
\label{lemma:invariants}
Under the update rule for Algorithm \ref{alg:general_greedy_maxev}, we have the following

\begin{itemize}
\item For any two timestamps $u$, $t$, such that $1 \leq u \leq t \leq n$:
\begin{equation}
\label{eqn:multi_step_p_inv}
    S_t \leq S_u + Q_t \cdot (P_t - P_u)
\end{equation}
\item For any timestamp \(2 \leq t \leq n\),
\begin{equation}
\label{eqn:single_step_inv}
    S_t \leq S_{t-1} + d \cdot Q_t
\end{equation}
\end{itemize}
\end{lemma}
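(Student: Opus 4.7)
I would analyze each iteration $t$ of Algorithm~\ref{alg:general_greedy_maxev} by decomposing it into (a) the basic update $A_{t-1} \to A_t'$, which is identical to the update rule of Algorithm~\ref{alg:greedy_maxev} (so $A_t' = \widehat{A} A_{t-1}$ with $\widehat{A}$ from~(\ref{eq:A-hat})), and (b) the singular-value correction $A_t' \to A_t$, which raises each $\tau_{i,t}' < M_t/\xi$ up to $M_t/\xi$. Write $S_t', P_t'$ for the intermediate potential values after (a) but before (b), and let $W = V_t^\top (\jstar)^\top \jstar V_t$, so that $W_{ii} = \norm{\jstar V_t e_i}^2$, $S_t' = \sum_i (\tau_{i,t}')^2 W_{ii}$, and $S_t = \sum_i \tau_{i,t}^2 W_{ii}$.

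For sub-step (a) I would reuse the analysis of Section~\ref{section:natural_attempt} verbatim: Lemma~\ref{lemma:invariants_precorrection} yields $S_t' - S_{t-1} = (1 - \norm{A_{t-1} x_t}^{-2})\norm{\jstar x_t}^2 \leq \norm{\jstar x_t}^2 \leq Q_t$, and the proof of Lemma~\ref{lemma:invariant_basic_stronger} yields $S_t' - S_{t-1} \leq Q_t(P_t' - P_{t-1})$. For sub-step (b), letting $C = \{i : \tau_{i,t}' < M_t/\xi\}$, direct computation from the formulas for $S_t$ and $P_t = 2\log|\det\jstar| + 2\sum_i \log \tau_{i,t}$ gives
\begin{align*}
S_t - S_t' = \sum_{i \in C} W_{ii}\left[(M_t/\xi)^2 - (\tau_{i,t}')^2\right], \qquad P_t - P_t' = 2\sum_{i \in C} \log\frac{M_t/\xi}{\tau_{i,t}'}.
\end{align*}
The crucial estimate is that since $\cE^{*} \supseteq X_t$ and $X_t$ contains a point of norm $M_t$, the longest semi-axis of $\cE^{*}$ is at least $M_t$, so $\sigma_{\min}(\jstar) \leq 1/M_t$; combined with $\sigma_{\max}(\jstar)/\sigma_{\min}(\jstar) \leq \xi$, this forces $\sigma_{\max}(\jstar) \leq \xi/M_t$, whence $W_{ii}(M_t/\xi)^2 \leq 1$ for every $i$ and $\sum_i W_{ii}(M_t/\xi)^2 \leq (M_t/\xi)^2 \fnorm{\jstar}^2 \leq d$.

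For (\ref{eqn:multi_step_p_inv}) I would argue by induction on $t$ with $u$ fixed. Applying the elementary inequality $\frac{y^2-1}{2\log y} < y^2$ (with $y = (M_t/\xi)/\tau_{i,t}' > 1$) to each per-index correction contribution shows that the correction step adds at most $W_{ii}(M_t/\xi)^2 \leq 1$ in $S$ per unit of $P$; telescoping this with the basic-update ratio $\leq Q_t$ and using the monotonicity $Q_\tau \leq Q_t$ for $\tau \leq t$ delivers $S_t - S_u \leq Q_t(P_t - P_u)$. For (\ref{eqn:single_step_inv}) I would combine the basic-update bound $S_t' - S_{t-1} \leq Q_t$ with the aggregate correction bound $S_t - S_t' \leq d$, and then strengthen the combination to the product form $d \cdot Q_t$ via the same geometric link between the correction step and $Q_t$.

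The main obstacle I anticipate is closing the gap between the per-index ratio $\leq 1$ available from the elementary-inequality argument and the ratio $\leq Q_t$ needed to preserve the $Q_t$-coefficient in (\ref{eqn:multi_step_p_inv}); the same issue resurfaces as obtaining the multiplicative $d \cdot Q_t$ rather than the additive $d + Q_t$ in (\ref{eqn:single_step_inv}). I expect this gap to close by exploiting the geometric interaction between the narrowness of $\cE_t'$ (required to trigger correction at index $i$, i.e. $\tau_{i,t}' < M_t/\xi$) and the inclusion $X_t \subseteq \cE_t' \cap \cE^{*}$, which should sharpen the bound on $W_{ii}$ beyond the coarse $W_{ii} \leq \sigma_{\max}(\jstar)^2$.
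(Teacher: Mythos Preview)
Your decomposition into the basic update (a) and the correction (b) is exactly the paper's strategy; the paper phrases (b) as feeding ``ghost points'' $z_i = (M_t/\xi)w_i$ through the Algorithm~\ref{alg:greedy_maxev} update, but your direct formulas for $S_t-S_t'$ and $P_t-P_t'$ are the same computation. Your elementary inequality $\frac{y^2-1}{2\log y}<y^2$ likewise reproduces the conclusion of Lemma~\ref{lemma:invariant_basic_stronger} applied to each $z_i$: the per-index $S/P$ ratio is at most $W_{ii}(M_t/\xi)^2 = \norm{\jstar z_i}^2$.

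The genuine gap is the one you flag: you only prove $W_{ii}(M_t/\xi)^2\le 1$, whereas both parts of the lemma require $W_{ii}(M_t/\xi)^2\le Q_t$. Your proposed route to close it (narrowness of $\cE_t'$, the inclusion $X_t\subseteq\cE_t'\cap\cE^{*}$) is not how the paper does it, and I don't see it working. The missing step is one line of linear algebra you already have the ingredients for. Let $p=\argmax_{i\le t}\norm{x_i}$, so $\norm{x_p}=M_t$. You established $\sigma_{\max}(\jstar)\le\xi\,\sigma_{\min}(\jstar)$; hence
\[
\norm{\jstar z_i}=\frac{M_t}{\xi}\norm{\jstar w_i}\le\frac{M_t}{\xi}\,\sigma_{\max}(\jstar)\le M_t\,\sigma_{\min}(\jstar)\le\norm{\jstar x_p}\le\sqrt{Q_t}.
\]
This gives the ratio $\le Q_t$ for every corrected index, and telescoping (together with the $Q_\tau\le Q_t$ monotonicity you mention) yields~(\ref{eqn:multi_step_p_inv}) directly.

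For~(\ref{eqn:single_step_inv}) the same bound gives $S_t-S_{t-1}\le (d+1)Q_t$ (one real point and $d$ ghosts, each contributing at most $Q_t$). To shave this to $d\cdot Q_t$ you need one more observation, which your proposal omits: since $x_1,\dots,x_t\in\cE_t'$, at least one semi-axis of $\cE_t'$ has length $\ge M_t\ge M_t/\xi$, so at least one index is \emph{not} corrected and contributes nothing. Your additive route $S_t-S_{t-1}\le Q_t+d$ from the coarse bound $W_{ii}(M_t/\xi)^2\le 1$ does not imply $d\cdot Q_t$ (indeed $Q_t\le 1$, so $Q_t+d\ge d\cdot Q_t$), and no ``strengthening to product form'' is available without the sharper $Q_t$ bound above.
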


Let $\sigma_{i,j}$ be the $i$th singular value of $\astar A_{j}^{-1}$ and $\sigma_{\max, j} = \max_i \sigma_{i,j}$.
By Lemma~\ref{lemma:phi_control_max_sv} (part 2), $\alpha \leq \sigma_{\max, n} \leq \sqrt{\sum_{i=1}^d \sigma_{i,n}^2}$, hence upper bounding \(S_n = \sum_{i=1}^d \sigma_{i,n}^2\) is sufficient to finish the analysis.
We do this by bounding $S_1$ and then applying Lemma~\ref{lemma:invariants}, as described below.

\mybox{\begin{lemma}
\label{lemma:s_initial_value}
We have $S_1 \le d$.
\end{lemma}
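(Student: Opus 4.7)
The plan is to bound $S_1 = \fnorm{\astar A_1^{-1}}^2$ by directly expanding using the initialization of $A_1$. Since $A_1 = U_1 \Sigma_1^{-1} V_1^T$ with $U_1 = I$, we have $A_1^{-1} = V_1 \Sigma_1$, so $\astar A_1^{-1} = (\astar V_1)\Sigma_1$. Expanding the Frobenius norm column by column gives
\[
S_1 \;=\; \sum_{j=1}^d \Sigma_{1,jj}^{\,2}\,\norm{\astar (V_1)_{:,j}}^2,
\]
and I would bound the $j=1$ summand separately from the $j\ge 2$ summands, using that $V_1$ is chosen so that $V_1 e_1 = x_1/\norm{x_1}$.

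For the first column, $\Sigma_{1,11} = \norm{x_1}$ and $(V_1)_{:,1} = x_1/\norm{x_1}$, so the term collapses to $\norm{\astar x_1}^2$, which is at most $1$ since $x_1 \in X \subseteq \cE^{*}$. For each $j \ge 2$, $\Sigma_{1,jj} = \norm{x_1}/\xi$ and $\norm{\astar (V_1)_{:,j}} \le \sigma_{\max}(\astar)$ (since $(V_1)_{:,j}$ is a unit vector), giving an upper bound of $\sigma_{\max}(\astar)^2\,\norm{x_1}^2/\xi^2$ per term. Summing, $S_1 \le 1 + (d-1)\cdot\bigl(\sigma_{\max}(\astar)\norm{x_1}/\xi\bigr)^2$, so it only remains to show the key inequality $\sigma_{\max}(\astar)\,\norm{x_1} \le \xi$.

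The main obstacle, and the only interesting step, is establishing this inequality using only the two hypotheses on $\cE^{*}$ (without any $r,R$ estimates). I would argue as follows: since $x_1 \in X \subseteq \cE^{*}$, the vector $x_1$ lies inside the ellipsoid $\cE^{*}$, whose longest semi-axis equals $1/\sigma_{\min}(\astar)$; hence $\norm{x_1} \le 1/\sigma_{\min}(\astar)$, i.e.\ $\sigma_{\min}(\astar)\norm{x_1}\le 1$. Combining this with the assumption $\kappa(\cE^{*}) = \sigma_{\max}(\astar)/\sigma_{\min}(\astar) \le \xi$ yields
\[
\sigma_{\max}(\astar)\norm{x_1} \;\le\; \xi\,\sigma_{\min}(\astar)\norm{x_1} \;\le\; \xi.
\]
Plugging this in gives each $j\ge 2$ summand at most $1$, so $S_1 \le 1 + (d-1) \le d$, as required.
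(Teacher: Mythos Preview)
Your proof is correct and follows essentially the same approach as the paper: both expand $S_1 = \sum_i \norm{\astar A_1^{-1} e_i}^2$ along the semi-axis directions $\tilde w_i = \Sigma_{1,ii}(V_1)_{:,i}$ of $\cE_1$ and show each summand is at most $1$. The only cosmetic difference is in the $j\ge 2$ step---the paper observes that $\norm{\tilde w_j} = \norm{x_1}/\xi$ places $\tilde w_j$ inside $X\subseteq \cE^{*}$ (using $\kappa(X)\le\xi$), whereas you reach the same conclusion via the singular-value bound $\sigma_{\max}(\astar)\norm{x_1}/\xi \le 1$ using $\kappa(\cE^{*})\le\xi$.
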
}

By applying the equations from Lemma \ref{lemma:invariants} in sequence, we can relate \(S_n\) to \(S_1\). 
Only applying (\ref{eqn:single_step_inv}) repeatedly is insufficient, as then the resulting upper bound on \(S_n - S_1\) grows linearly with \(n\).
Even though (\ref{eqn:multi_step_p_inv}) is efficient for large stretches where \(t\) is much later than \(u\), it also does not give a good bound when  \(Q_t\) is more than a constant factor larger than \(Q_u\) (this phenomenon may not be apparent from the equation itself, but becomes clear from the upcoming analysis).
In order to obtain the desired bound on $S_n$, we split the \(Q_t\)s into contiguous groups such that \(Q_t\) does not increase significantly within any group, and differs by at least \(e\) outside the groups.

\begin{lemma}\label{lem:bound-on-S_n}
After running Algorithm \ref{alg:general_greedy_maxev} for $n$ steps, we have
\begin{align*}
    S_n = \sum_{i=1}^d \sigma_{i,n}^2 \leq \magicnumber{\xi}
\end{align*}
\end{lemma}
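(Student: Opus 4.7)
\textbf{The plan} is to telescope the two inequalities of Lemma~\ref{lemma:invariants} across a carefully chosen partition of $\{1,\dots,n\}$. First I would partition the timestamps into contiguous groups $G_1,\dots,G_K$ greedily: set $a_1 = 1$, and let $b_k$ be the largest index with $Q_{b_k} \le eQ_{a_k}$, then take $a_{k+1} = b_k + 1$ (if $b_k < n$). By construction, $Q_t \le eQ_{a_k}$ within each group and $Q_{a_{k+1}} > eQ_{a_k}$ across boundaries, so the $Q_{a_k}$ grow by at least a factor of $e$ from one group to the next. This aligns exactly with the intuition from the excerpt that (\ref{eqn:multi_step_p_inv}) is well-suited within a group (where $Q_t/Q_{a_k}$ stays bounded by $e$) while (\ref{eqn:single_step_inv}) is used to absorb the isolated boundary jumps.

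Within each group I would apply (\ref{eqn:multi_step_p_inv}) from $u = a_k$ to $t = b_k$, which together with $Q_{b_k} \le eQ_{a_k}$ gives $S_{b_k} - S_{a_k} \le eQ_{a_k}(P_{b_k} - P_{a_k})$. At each group boundary I would apply (\ref{eqn:single_step_inv}) once to cross from $b_k$ to $a_{k+1}$: $S_{a_{k+1}} - S_{b_k} \le dQ_{a_{k+1}}$. Telescoping yields
$$S_n \le S_1 + e\sum_{k=1}^K Q_{a_k}(P_{b_k} - P_{a_k}) + d\sum_{k=2}^K Q_{a_k}.$$
Since $Q_{a_K} \le Q_n \le 1$ and the $Q_{a_k}$ grow at least by a factor of $e$, the last sum telescopes geometrically and contributes at most $d\cdot \tfrac{e}{e-1} \le 2d$. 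Combined with $S_1 \le d$ from Lemma~\ref{lemma:s_initial_value}, it remains to bound the within-group sum $e\sum_k Q_{a_k}(P_{b_k} - P_{a_k})$ by $O(d\log\xi)$.

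For the within-group sum, the key observation is that $Q_{a_k}$ is geometrically small for early groups, so even a significant $P$-increase in such a group is damped by the small prefactor $Q_{a_k}$; while for late groups where $Q_{a_k}$ approaches $1$, the singular-value correction step of Algorithm~\ref{alg:general_greedy_maxev} (which pins the smallest semi-axis of $\cE_t$ to $M_t/\xi$ and thereby links $P$-growth to the current scale) prevents $P$ from growing too much before $Q_t$ crosses the $eQ_{a_k}$ threshold and pushes the group to end. Quantifying this trade-off, I would argue that each group contributes $O(d)$ to the sum, and that the number of groups is $K = O(\log\xi)$ (using $\xi \ge \kappa(X)$, possibly via a WLOG reduction on $\cE^*$ analogous to the one in the proof of Theorem~\ref{thm:simple_alg_arb_ellipsoid}). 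Summing over groups gives the desired $O(d\log\xi)$, and thus $S_n \le O(d\log\xi + d) \le \magicnumber{\xi}$.

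\textbf{The main obstacle} will be controlling the within-group sum $\sum_k Q_{a_k}(P_{b_k} - P_{a_k})$ cleanly. The crude bound $Q_{a_k} \le 1$ reduces it to $P_n - P_1$, which is essentially $2\log(\mathrm{vol}(\cE_n)/\mathrm{vol}(\cE_1))$ and can be arbitrarily large as a function of the stream; this matches the warning in the excerpt that (\ref{eqn:multi_step_p_inv}) is wasteful when $Q_t/Q_u$ is large, and is the reason one cannot simply apply it once over $[1,n]$. The correct argument must tie the in-group $P$-increase to the in-group $Q$-increase through the correction step, so that a large within-group $P$-jump forces $Q_t$ to leave the window $[Q_{a_k}, eQ_{a_k}]$ and terminates the group. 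A secondary point is establishing $K = O(\log\xi)$ independently of $n$, which relies on the aspect-ratio constraint on $\cE^*$ and the initialization of $\cE_1$ with axes in $[\|x_1\|/\xi,\|x_1\|]$.
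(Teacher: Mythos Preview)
Your partition-and-telescope skeleton is essentially the paper's (the paper indexes the groups backward from $n$, you index forward from $1$; either works), and your treatment of the boundary sum $d\sum_{k\ge 2}Q_{a_k}$ via geometric decay is correct. The gap is in the within-group sum.

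Both of your proposed mechanisms for bounding $e\sum_k Q_{a_k}(P_{b_k}-P_{a_k})$ fail. First, $K=O(\log\xi)$ is simply false: $K$ is governed by $\log(Q_n/Q_1)$, and $Q_1=\|J^{*}x_1\|^2$ can be arbitrarily small (take $x_1$ short), independently of $\xi$. Second, the correction step does \emph{not} cap $P_{b_k}-P_{a_k}$ within a group. It pins $\sigma_{\min}(A_t^{-1})\ge M_t/\xi$, which gives a \emph{lower} bound on $P_t$ (equivalently, an upper bound on $-P_t$), not an upper bound on the increment $P_{b_k}-P_{a_k}$. There is no reason large $P$-growth must push $Q_t$ across the $eQ_{a_k}$ threshold: $P_t$ grows whenever $x_t$ falls outside $\cE_{t-1}$, but $Q_t$ only grows when $\|J^{*}x_t\|$ exceeds the running max, and these are unrelated events.

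What the paper actually does with this sum is split it as $\sum_k Q_{a_k}P_{b_k}+\sum_k Q_{a_k}(-P_{a_k})$. For the second piece, the correction step gives $\sigma_{i,a_k}^2\ge Q_{a_k}/\xi^4$, hence $-P_{a_k}\le 4d\log\xi+d\log(1/Q_{a_k})$; multiplying by $Q_{a_k}$ and summing (using $\sum Q_{a_k}=O(1)$ and $\sum Q_{a_k}\log(1/Q_{a_k})=O(1)$ by geometric decay) yields $O(d\log\xi)$. For the first piece one uses $P_{b_k}\le P_n$ and the geometric sum to get $O(P_n)$. This leaves $S_n\le O(d\log\xi)+c\,P_n$ with a leftover $P_n$ term, which is then absorbed into the left side via the elementary inequality $\sigma^2-c\log\sigma^2\ge c'\sigma^2$ applied coordinatewise to $S_n-cP_n$. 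That final absorption step is the missing idea in your outline; without it there is no way to close the argument, since $P_n$ has no a~priori bound independent of $S_n$.
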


Finally, we write $\sigmamax{n}^2 \le \sum_{i=1}^d \sigma^2_{i,n} \le \magicnumber{\xi}$. Recall that this is sufficient to conclude the proof of Theorem~\ref{thm:greedy_approx} -- specifically, since we have $\sigmamax{n} = \sigma_{\max}\inparen{\astar \cdot A_n^{-1}} \le \sqrt{\magicnumber{\xi}}$, we can invoke part 2 of Lemma~\ref{lemma:phi_control_max_sv} to arrive at $\cE_n \subseteq \sqrt{\magicnumber{\xi}}\cdot\cE^{*}$. 
\end{proof}

\section{Acknowledgments} 
YM and MO were supported in part by NSF awards CCF-1718820, CCF-1955173, and CCF-1934843.
NSM was supported by a United States NSF Graduate Research Fellowship. We thank Meghal Gupta, Darshan Thaker, Taisuke Yasuda, and the anonymous reviewers for helpful feedback and discussions.

\printbibliography
\newpage
\appendix

\section{Proof of Theorem~\ref{thm:simul_ellipsoid_approx}: Approximating Convex Polytopes with Ellipsoids}
\label{section:ellipsoidal_approximations}

In this section, we prove \Cref{thm:simul_ellipsoid_approx}.
To this end, we show that every centrally symmetric convex polytope $X$ is well-approximated by the intersection of all ellipsoids $\cE$ containing $X$ with comparable aspect ratio.
This will immediately imply Theorem~\ref{thm:simul_ellipsoid_approx}.
Let $q \geq \nfrac{1}{\kappa(X)}$ and $\delta = \sqrt{1+\nfrac{1}{q^2}} - 1$. 
Define 
$$
E_{q} = \inbraces{\cE \suchthat X \subseteq \cE \text{ and } \kappa(\cE) \le q\cdot\kappa(X)}
\quad \text{and}\quad
\cA_q = \bigcap_{\cE \in E_q}\cE 
$$
We show that $\cA_q$ provides a good approximation for $X$.
\mybox{\begin{lemma}
\label{thm:ellipsoid_intersections_general}
We have \(
    \frac{1}{1 + \delta}\cdot\cA_q \subseteq X \subseteq \cA_q
\).
\end{lemma}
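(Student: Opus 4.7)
The plan is to reduce the non-trivial inclusion $\tfrac{1}{1+\delta}\cA_q \subseteq X$ to the following pointwise statement: for every $y \notin X$ there exists some $\cE \in E_q$ with $(1+\delta)\, y \notin \cE$. Since $\cA_q$ is centrally symmetric and equal to $\bigcap_{\cE \in E_q}\cE$, this is equivalent to $(1+\delta)\, y \notin \cA_q$, which is in turn equivalent to $y \notin \tfrac{1}{1+\delta}\cA_q$. The easy inclusion $X \subseteq \cA_q$ holds trivially from the definition of $E_q$.

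To separate $y$ from $X$, fix $r, R > 0$ with $r\cdot B_2^d \subseteq X \subseteq R\cdot B_2^d$ and $\kappa(X) = R/r$. Since $X$ is a closed centrally symmetric convex body and $y \notin X$, a Hahn--Banach separation yields a unit vector $u \in \R^d$ such that $|\langle u, y\rangle| > \mu := \sup_{x \in X}|\langle u, x\rangle|$ (using symmetry of $X$ to replace the supremum of $\langle u, x\rangle$ with the supremum of its absolute value). The bounds $r\cdot B_2^d \subseteq X \subseteq R\cdot B_2^d$ then force $r \leq \mu \leq R$.

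The key step is constructing an appropriate $\cE \in E_q$ that excludes $(1+\delta)\,y$. The plan is to use the one-parameter family
\[
\cE_c := \{z \in \R^d : c\,\langle u, z\rangle^2 + \|z\|^2 \leq c\mu^2 + R^2\}, \qquad c \geq 0,
\]
which ``squeezes'' the enclosing ball of radius $R$ along $u$ as $c$ grows. Each $\cE_c$ contains $X$ since $|\langle u, x\rangle| \leq \mu$ and $\|x\| \leq R$ on $X$. The defining quadratic form has eigenvalues $c+1$ along $u$ and $1$ in the orthogonal complement, so the aspect ratio of $\cE_c$ computes to $\sqrt{c+1}$. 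Setting $c := (q\,\kappa(X))^2 - 1 \geq 0$ (which is nonnegative because of the hypothesis $q \geq 1/\kappa(X)$) gives $\kappa(\cE_c) = q\,\kappa(X)$, so $\cE_c \in E_q$.

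The final step is to check $(1+\delta)\,y \notin \cE_c$. Substituting $(1+\delta)\,y$ into the defining inequality and using $\|y\|^2 \geq \langle u, y\rangle^2 > \mu^2$, it suffices to establish $(1+\delta)^2(c+1)\mu^2 > c\mu^2 + R^2$. With $(1+\delta)^2 = 1 + 1/q^2$ and $c+1 = q^2 R^2/r^2$, this simplifies algebraically to $\mu^2(1 + R^2/r^2) > R^2$, which follows from $\mu \geq r$ via $\mu^2 \geq r^2 > r^2 R^2/(r^2 + R^2)$. The main subtlety is this tight calibration: the separation margin $\mu$ can be as small as $r$, and $\delta = \sqrt{1+1/q^2} - 1$ is precisely tuned so that the construction still succeeds in this worst case. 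Once the right family $\cE_c$ is written down, everything else is direct computation.
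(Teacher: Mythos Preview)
Your proof is correct and follows essentially the same approach as the paper. Both arguments reduce the nontrivial inclusion to a separating-hyperplane statement and then build an ellipsoid in $E_q$ that is ``squeezed'' along the normal direction $u$; the paper phrases this dually (showing $\cA_q$ lies inside every $(1+\delta)$-expanded supporting slab of $X$) while you phrase it pointwise (excluding $(1+\delta)y$ from some $\cE\in E_q$), and the concrete ellipsoids differ only by harmless constants on the right-hand side, but the construction and the key inequality are the same.
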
}
\begin{proof}
The first inclusion $X \subseteq \cA_q$ is trivial, since all ellipsoids $\cE$ in $E_q$ contain $X$. We now prove the the first inclusion: 
\begin{equation}\label{eq:AqisinX}
cA_q\subseteq (1+\delta) X.
\end{equation}
Consider the set of all slabs of the form $\{x: |\langle a,x\rangle| \leq 1\}$ that contain $X$.
Since $X$ is a centrally symmetric convex body,
the intersection of all the slabs in $S$ equals $X$.
Further, let set $S'$  consist of the slabs in $S$ expanded by a factor of $(1+\delta)$: for every slab $|\langle a,x\rangle| \leq 1$ in $S$, there is a slab $|\langle a,x\rangle| \leq (1+\delta)$ in $S'$.
Then the intersection of  slabs in $S'$ equals $(1+\delta)X$.
Thus, to prove inclusion~(\ref{eq:AqisinX}), it is sufficient to prove that $|\langle a,x\rangle| \leq (1+\delta)$ in $S$.
To this end, we construct an ellipsoid $\cE$ in $E_q$ that 
lies in the slab $|\langle a,x\rangle| \leq (1+\delta)$
(and contains $\cA_q$, by the definition of $\cA_q$).

We complement vector $\nfrac{a}{\|a\|}$ to an orthonormal basis for $\R^d$:
$w_1 = \nfrac{a}{\|a\|},w_2,\dots,w_d$. 
Recall that the condition number $\kappa(X)$ equals the ratio of the radius $r$ of the largest inscribed ball to the radius $R$ of the smallest circumscribed ball. Since the width of slab $|\langle a,x\rangle| \leq 1$ containing $X$ is $\nfrac{2}{\|a\|}$, we have $r \leq \nfrac{1}{\|a\|}$. Accordingly, $R\leq \nfrac{\kappa(X)}{\|a\|}$.
Therefore, for $x\in X\subseteq R \cdot B_2^d$, 
$\sum_{i=1}^d \langle w_i, x\rangle^2 = \|x\|^2 \leq R^2 \leq 
\inparen{\nfrac{\kappa(X)}{\|a\|}}^2$. Also note that 
$\langle w_1, x\rangle^2 = \nfrac{\ip{a,x}^2}{\norm{a}^2} \leq \nfrac{1}{\norm{a}^2}$. We are ready to define ellipsoid $\cE$:
$$\cE = \left\{x: \langle w_1,x\rangle^2 + \sum_{i=2}^d \frac{\ip{w_i,x}^2}{q^2\kappa(X)^2} \le \frac{1}{\|a\|^2}\Bigl(1 + \frac{1}{q^2}\Bigr)\right\}.$$

Now we verify that (i) $\cE\in E_q$ and (ii) all points in $\cE$ satisfy
$|\langle a,x\rangle| \leq (1+\delta)$.
First, note that the aspect ratio of $\cal E$ is $q\kappa(X)$.
Using the bounds we derived above, we get that for all $x\in X$
$$\langle w_1,x\rangle^2 + \sum_{i=2}^d \frac{\ip{w_i,x}^2}{q^2\kappa(X)^2}
\leq  \frac{1}{\|a\|^2}  + 
\frac{\kappa(X)^2}{\|a\|^2} \cdot \frac{1}{q^2\kappa(X)^2}
= \frac{1}{\|a\|^2}\Bigl(1 + \frac{1}{q^2}\Bigr).
$$
Therefore, all points from $X$ lie in $\cE$; that is, $X\subseteq \cE$. We conclude that $\cE \in E_q$, as required.

Finally, if $x\in \cE$, then 
$$\langle x, a\rangle = \sqrt{\|a\|^2 \langle w_1,x\rangle^2} \leq \sqrt{1 + \nfrac{1}{q^2}} = 1 + \delta.$$
This concludes the proof.
\end{proof}

To get \Cref{thm:simul_ellipsoid_approx}, we apply \Cref{thm:ellipsoid_intersections_general} with $q = 1$.
Clearly, an ellipsoid \(\cE\) that is contained in every \(\cE^{*} \in E_1\) is also contained in \(\cA \subseteq \sqrt{2} \cdot X\).

\section{Tracking the Minimum-Volume Outer Ellipsoid}
\label{section:tracking_mvoe}

Observe that the guarantee of Theorem \ref{thm:greedy_approx} gives a guarantee similar to that given by John's Theorem for centrally-symmetric convex bodies. Therefore, a natural question is, ``how closely can any one-pass monotonic algorithm approximate the minimum-volume outer ellipsoid for a centrally-symmetric convex body?'' We formalize this notion below.

\begin{definition}[Approximation to Minimum Volume Outer Ellipsoid]
We say a streaming algorithm $A$ $\alpha$-approximates the minimum volume outer ellipsoid if $A$ outputs an ellipsoid $\cE_n$ satisfying $\cE_n \subseteq \alpha \cdot J(X)$, where $J(X)$ is the minimum volume outer ellipsoid for $X$.
\end{definition}

Theorem \ref{thm:greedy_approx_tight} asserts that for a natural class of streaming algorithms, it is not possible to approximate the minimum volume outer ellipsoid up to factor $< \sqrt{d}$ in the worst case.

\mybox{\begin{theorem}
\label{thm:greedy_approx_tight}
Every one-pass monotonic deterministic streaming algorithm for Problem \ref{problem:main_problem} has approximation factor to the minimum volume outer ellipsoid of at least $\sqrt{d}$, for infinitely many $d$.
\end{theorem}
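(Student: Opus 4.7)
The plan is to construct an adaptive adversarial input sequence that exploits the monotonicity constraint. The core idea: after presenting $e_1, \dots, e_{d-1}$, the algorithm must commit to a non-degenerate ellipsoid $\cE_{d-1}$, which has some positive extent $\ell$ in the remaining direction $e_d$; then the adversary adds one final point $\nfrac{\ell}{\sqrt{d}} e_d$, which makes $J(X)$ shrink to extent exactly $\nfrac{\ell}{\sqrt{d}}$ in that direction, while monotonicity keeps $\cE_n$ at extent at least $\ell$ there.

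For the first step, I would fix $d \ge 2$ and any monotonic deterministic algorithm $\cA$, and consider the adversary that first feeds $\cA$ the sequence $x_1 = e_1, \dots, x_{d-1} = e_{d-1}$, where $e_1, \dots, e_d$ are the standard basis vectors of $\R^d$. After step $d-1$, let $\ell := \max\inbraces{t \ge 0 : t e_d \in \cE_{d-1}}$; since $\cE_{d-1}$ is a proper (non-degenerate) ellipsoid, $\ell > 0$. The adversary then sets $\epsilon := \nfrac{\ell}{\sqrt{d}}$ and presents $x_d = \epsilon e_d$, so that $X = \conv{\inbraces{\pm e_1, \dots, \pm e_{d-1}, \pm \epsilon e_d}}$.

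Next I would compute $J(X)$ by observing that $X$ is the image of the cross-polytope $B_1^d$ under the invertible linear map $T = \diag{1, \dots, 1, \epsilon}$. Theorem~\ref{thm:johns} gives $J(B_1^d) = B_2^d$, and the MVOE property is preserved under invertible linear transformations (since such a map scales all volumes by the fixed factor $\smallabs{\detv{T}}$ and preserves containment). Therefore $J(X) = T(B_2^d) = \inbraces{x \in \R^d : x_1^2 + \cdots + x_{d-1}^2 + \nfrac{x_d^2}{\epsilon^2} \le 1}$, whose extent in the $e_d$ direction is exactly $\epsilon$.

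Finally, monotonicity gives $\cE_d \supseteq \cE_{d-1}$, so the boundary point $\ell e_d$ of $\cE_{d-1}$ lies in $\cE_d$; but $\alpha \cdot J(X)$ has $e_d$-extent $\alpha \epsilon$. Requiring $\cE_d \subseteq \alpha \cdot J(X)$ therefore forces $\alpha \ge \nfrac{\ell}{\epsilon} = \sqrt{d}$, completing the argument for every $d \ge 2$ (hence for infinitely many $d$). The argument has no real ``hard part''; the only subtleties are checking (i) that MVOE commutes with invertible linear maps (standard) and (ii) that a non-degenerate ellipsoid $\cE_A = \inbraces{x : \norm{Ax} \le 1}$ with $A$ non-singular has strictly positive extent in every direction (immediate, since the $e_d$-extent equals $\nfrac{1}{\norm{Ae_d}} > 0$).
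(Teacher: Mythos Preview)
Your argument is correct and in fact cleaner than the paper's. The paper takes a quite different route: in Phase~1 it feeds a full Hadamard basis $v_1,\dots,v_d$ (which is why it only claims the bound for infinitely many $d$), and then in Phase~2 it considers $d$ different continuations, one for each coordinate direction, producing outcome ellipsoids $\cE_{OPT(i)}$ that each have a short axis along $e_i$. Using monotonicity, the Phase-1 ellipsoid $\widehat{\cE}$ must lie in $C\cdot\bigcap_i \cE_{OPT(i)}$ whenever the algorithm is a $C$-approximation on every outcome; a Frobenius-norm averaging over the Hadamard basis then forces $C\ge\sqrt{d-\eps}$.

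Your construction avoids all of this by leaving one direction unexplored until the last step and then exploiting the single extent $\ell$ that the algorithm has already committed to in that direction. This buys you three things: the argument works for \emph{every} $d\ge 2$ (not just Hadamard dimensions), it needs only one adversarial continuation instead of $d$, and it bypasses the Frobenius computation entirely. The trade-off is that your Phase~1 body $X_{d-1}=\conv\{\pm e_1,\dots,\pm e_{d-1}\}$ is degenerate, so the proof leans on the paper's convention that the algorithm must output a genuinely non-degenerate ellipsoid $\cE_A$ with $A$ non-singular at every step; the paper's Phase~1 body is already full-dimensional, so its argument does not depend on this convention. Both are valid given the definitions in play, but yours is the more elementary and more general proof.
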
}
\begin{proof}
Before we begin, recall that a Hadamard basis is a set of vectors $v_1, \dots, v_d$ such that:
\begin{itemize}
    \item $\norm{v_i} = 1$, for all $i \in [d]$;
    \item For all $i \neq j$, $\ip{v_i,v_j}=0$;
    \item Every entry of $v_i$ is in $\inbraces{\pm \nfrac{1}{\sqrt{d}}}$.
\end{itemize}

Our family of hard instances proceeds in two phases.

\greenbox{\paragraph{Phase 1} Let $d$ be such that there exists a Hadamard basis for $\R^d$. Consider a corresponding Hadamard basis $v_1, \dots, v_d$. The adversary gives the algorithm the points $v_1, \dots, v_d$.

\paragraph{Phase 2} The adversary selects $i \in [d]$ arbitrarily and $\eps \in (0, d-1)$ arbitrarily. They then define the vectors $w_i \coloneqq e_i \cdot \nfrac{1}{\sqrt{d-\eps}}$ and $w_j \coloneqq e_j \cdot \sqrt{\nfrac{d-1}{\eps}}$ for all $j \neq i$. The adversary gives the algorithm the points $w_1, \dots, w_d$. Call the outcome here ``Outcome (i).''}

It is easy to see that at the end of Phase 1, the minimum volume outer ellipsoid is simply $B_2^d$. Furthermore, the algorithm's solution $\widehat{\cE}$ contains $\conv{\pm v_1, \dots, \pm v_d}$. On the other hand, consider the following claim.
\mybox{\begin{lemma}
The following ellipsoid is the minimum-volume outer ellipsoid for Outcome (i):
\begin{align*}
    \cE_{OPT(i)} = \inbraces{x \suchthat 1 \ge \frac{x_i^2}{\inparen{\nfrac{1}{\sqrt{d-\eps}}}^2} + \sum_{j \neq i}^d \frac{x_j^2}{\inparen{\sqrt{\nfrac{d-1}{\eps}}}^2}}
\end{align*}
\end{lemma}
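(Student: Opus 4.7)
\emph{Proof plan.} The strategy is to first recognize $\cE_{OPT(i)}$ as the minimum-volume outer ellipsoid (MVEE) of the scaled cross-polytope $C := \conv{\inbraces{\pm w_1,\dots,\pm w_d}}$, and then argue that the Phase~1 Hadamard vectors do not shrink the MVEE because each $\pm v_k$ already lies on $\partial \cE_{OPT(i)}$. Combined with the uniqueness of the MVEE and its monotonicity under set inclusion, this identifies $\cE_{OPT(i)}$ as the MVEE of the full Phase~1 + Phase~2 point set.

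First I would verify directly that every given point lies on $\partial \cE_{OPT(i)}$. Writing $Q = \diag{d-\eps,\,\eps/(d-1),\dots,\eps/(d-1)}$ (with the $d-\eps$ in coordinate $i$) so that $\cE_{OPT(i)} = \inbraces{x : x^T Q x \le 1}$, a one-line computation using $(v_k)_j^2 = 1/d$ for all $j$ gives
\[
v_k^T Q v_k = \frac{d-\eps}{d} + \frac{d-1}{d}\cdot\frac{\eps}{d-1} = 1,
\]
and by construction $w_k^T Q w_k = 1$ for every $k$, so all $\pm v_k, \pm w_k$ sit on the boundary of $\cE_{OPT(i)}$.

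Next I would invoke John's optimality condition for the MVEE of a centrally symmetric convex body: a covering ellipsoid $\inbraces{x : x^TQx \le 1}$ is the MVEE of a symmetric body $K$ if and only if there exist contact points $u_k \in \partial K \cap \partial \cE$ and positive weights $c_k$ with $\sum_k c_k u_k u_k^T = Q^{-1}$ and $\sum_k c_k = d$. Taking $C$ as $K$, the contact points $w_1,\dots,w_d$, and weights $c_k = 1$, one computes
\[
\sum_{k=1}^d w_k w_k^T = \frac{1}{d-\eps}\,e_i e_i^T + \sum_{k\ne i} \frac{d-1}{\eps}\,e_k e_k^T = Q^{-1},
\]
while $\sum_k c_k = d$, certifying that $\cE_{OPT(i)}$ is the MVEE of $C$.

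Finally I would conclude by monotonicity and uniqueness. Any ellipsoid covering the Phase~1 + Phase~2 point set in particular covers $C$, hence has volume at least $\mathrm{vol}(\cE_{OPT(i)})$; at the same time $\cE_{OPT(i)}$ itself covers every $\pm v_k$ (by the boundary check) and every $\pm w_k$, so it is a feasible covering ellipsoid. Uniqueness of the MVEE then forces $\cE_{OPT(i)}$ to be the MVEE of the entire stream, as claimed. The only nontrivial ingredient is John's optimality characterization for the outer ellipsoid; once that is in hand, every other step reduces to the short diagonal computation displayed above.
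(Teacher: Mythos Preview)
Your proof is correct and follows essentially the same approach as the paper: first identify $\cE_{OPT(i)}$ as the MVEE of the Phase~2 points $\{\pm w_k\}$, then check that every Hadamard vector $v_k$ lies on $\partial\cE_{OPT(i)}$, and conclude by monotonicity of the MVEE. The only difference is that you certify the first step via John's optimality condition, whereas the paper argues it more informally from the orthogonality of the $w_k$; your version is slightly more rigorous but otherwise identical in structure.
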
}
\begin{proof}
Notice that all the points $w_j$ are orthogonal. Thus, the minimum-volume outer ellipsoid containing all the $w_j$ must be the one whose axes are along the directions of $w_j$ and whose poles are located on $w_j$. Observe that $\cE_{OPT(i)}$ satisfies this, so it must be the minimum-volume outer ellipsoid for the convex body whose vertices are determined by the $w_j$.

It now remains to show that every Hadamard basis vector is on the surface of $\cE_{OPT(i)}$:
\begin{align*}
    \frac{\inparen{\nfrac{1}{\sqrt{d}}}^2}{\inparen{\nfrac{1}{\sqrt{d-\eps}}}^2} + \sum_{j\neq i}^d \frac{\inparen{\nfrac{1}{\sqrt{d}}}^2}{\inparen{\sqrt{\nfrac{d-1}{\eps}}}^2} = \frac{1}{d}\inparen{(d-\eps) + (d-1) \cdot \frac{\eps}{d-1}} = 1
\end{align*}
Since the minimum volume ellipsoid containing all the $w_j$ also contains the Hadamard basis vectors, it (i.e., $\cE_{OPT(i)}$) must be the minimum-volume outer ellipsoid for Outcome (i).
\end{proof}

We will now show that any that outputs an ellipsoid $\widehat{\cE}$ at the end of Phase 1 must have an approximation factor of at least $\sqrt{d-\eps}$ on at least one of Outcomes ($1, \dots, i$). Suppose that in each of Outcome (i), we obtain an ellipsoid $\widehat{\cE_i}$ that satisfies $C \cdot \cE_{OPT(i)} \supseteq \widehat{\cE_i}$. We now have:
\begin{align*}
    \conv{\inbraces{\pm v_1, \dots, v_d}} \subseteq \widehat{\cE} \subseteq \bigcap_{i=1}^d \widehat{\cE_{i}} \subseteq C \cdot \bigcap_{i = 1}^d \cE_{OPT(i)}
\end{align*}
We therefore want to argue about $\widehat{\cE}$ given that it must contain $\conv{\inbraces{\pm v_1, \dots, v_d}}$ and be contained by $C \cdot \bigcap_{i = 1}^d \cE_{OPT(i)}$. Let $A$ be a matrix mapping $\widehat{\cE}$ to the unit ball. Then, notice that we can write for all $i \in [d]$:
\begin{align*}
    \norm{Av_i} &\le 1 & \norm{A \cdot \frac{Ce_i}{\sqrt{d-\eps}}} &\ge 1
\end{align*}
In particular, the rightmost exclusion follows from the fact that $\nfrac{Ce_i}{\sqrt{d-\eps}}$ lies on the boundary of $C \cdot \bigcap_{i=1}^d \cE_{OPT(b.i)}$. Now, recall the well-known fact that for any unitary matrix $W$, we have $\fnorm{AW} = \fnorm{A}$ (see, e.g., \cite{horn1991}), and observe that we have:
\begin{align*}
    d \ge \sum_{i=1}^d \norm{Av_i}^2 = \fnorm{AV}^2 = \fnorm{A}^2 = \fnorm{AI}^2 = \sum_{i=1}^d \norm{Ae_i}^2 \ge \frac{d(d-\eps)}{C^2}
\end{align*}
Rearranging gives $C \ge \sqrt{d-\eps}$, as desired.
\end{proof}
\section{Dual Problem -- Inner Ellipsoidal Approximation}
\label{section:mvie}

We design our algorithms for Problem \ref{problem:main_problem} in the setting where we receive points $x_1,\dots, x_n$ defining $X=\conv{\{\pm x_1,\dots, \pm x_n\}}$. Alternatively, we may define a centrally symmetric convex polytope is by providing a set of its faces, or more generally, a set of slabs of the form $\{x:|a^Tx| \leq 1\}$. Accordingly, we may consider a different online model where 
inequalities $\{x:|a^Tx| \leq 1\}$ arrive one-by-one and the resulting polytope is their intersection. Using the notion of a polar set, we show that this model is essentially equivalent to the model we study in this paper. All our results equally apply to it. 

Thus, another possible formulation for Problem \ref{problem:main_problem} involves the algorithm receiving the linear constraints one-at-a-time instead of points from the body.

In this section, we show that this choice of formulation does not matter. Specifically, an algorithm for one of these variants yields an algorithm for the other. In fact, as written in Table \ref{table:primal_vs_dual}, these problems are dual to one another.

\begin{table}[H]
\centering
\begin{tabular}{|l|l|}
\hline
\textbf{Primal}                                                                                                                                                               & \textbf{Dual}                                                                                                                                                                                                   \\ \hline
\begin{tabular}[c]{@{}l@{}}Find $\cE$ (outer ellipsoid) such that:\\  $\nfrac{\cE}{\alpha} \subseteq X \subseteq \cE$\\ where $X = \conv{\inbraces{\pm x_1,\dots,\pm x_n}}$\end{tabular} & \begin{tabular}[c]{@{}l@{}}Find $\cE$ (inner ellipsoid) such that:\\  $\cE \subseteq Y \subseteq \alpha\cdot\cE$\\ where $Y = \inbraces{y \suchthat \abs{\ip{x_i,y}}\le 1\text{ for all } i\in[n]}$\end{tabular} \\ \hline
\end{tabular}
\caption{The primal and dual version of the ellipsoidal approximation problem.\label{table:primal_vs_dual}}
\end{table}

We first address the duality between the two problems in Table \ref{table:primal_vs_dual}. To do so, observe the following useful facts regarding convex polars.
\begin{itemize}
    \item If $A$ and $B$ are convex bodies, and if $A \subseteq B$, then $B^{\circ} \subseteq A^{\circ}$ (see Proposition 7.16(iv) in \cite{convexanalysis}).
    \item In Table \ref{table:primal_vs_dual}, $X^{\circ} = Y$, and $Y^{\circ} = X$.
    \item If an ellipsoid $\cE = \inbraces{x \suchthat \norm{Ax}\le 1}$, then $\cE^{\circ} = \inbraces{x \suchthat \norm{A^{-T}x} \le 1}$ (see Definition 2.17 in \cite{vishnoi2021}).
\end{itemize}
The first and third are well-known, and the second follows from the definition of the polar and that if $X$ is closed, convex, and contains the origin, then $(X^{\circ})^{\circ} = X$ (see Corollary 7.19(i) in \cite{convexanalysis}).

We now put these facts together to show that a solution to the primal problem can be converted to one for the dual problem. First, notice that we have $\cE$ and $\alpha$ such that $\nfrac{\cE}{\alpha} \subseteq X \subseteq \cE$. Using the first fact, we have $\cE^{\circ} \subseteq X^{\circ} \subseteq \inparen{\nfrac{\cE}{\alpha}}^{\circ}$. Using the second fact, we have $\cE^{\circ} \subseteq Y \subseteq \inparen{\nfrac{\cE}{\alpha}}^{\circ}$. Finally, using the third fact, we have $\inparen{\nfrac{\cE}{\alpha}}^{\circ} = \alpha \cdot \cE^{\circ}$, which yields $\cE^{\circ} \subseteq Y \subseteq \alpha\cdot\cE^{\circ}$. A similar argument shows that a solution to the dual yields a solution to the primal.

We now address how a solution to the streaming variant of the primal problem can be converted to a solution to the streaming variant of the dual problem. Specifically, suppose we are in the dual setting, wherein we receive linear constraints one-at-a-time. Our task is to find an $\alpha$-ellipsoidal approximation to $Y_t = \inbraces{y \suchthat \abs{\ip{x_i,y}}\le 1 \text{ for all } i \in [t]}$. Observe that every incoming linear constraint $\inbraces{y \suchthat \abs{\ip{x_i, y}} \le 1}$ can be treated as an incoming point $\pm x_i$ in the primal space. This means that we can apply our algorithm in the primal setting to obtain a solution in the primal space, which for all $t$ gives an ellipsoid such that $X_t \subseteq \cE_t$. We then compute the polar of the outer ellipsoid we obtain in the primal space (i.e., $\cE_t$) to obtain an inner ellipsoid in the dual space (i.e., $\cE_t^{\circ}$), which yields $\cE_t^{\circ} \subseteq Y_t$. As per our previous argument, this preserves the approximation factor -- at the end of the stream, we have $\cE_n^{\circ} \subseteq Y \subseteq \alpha \cdot \cE_n^{\circ}$, as desired.
\section{Proof of Claim~\ref{claim:minimum_volume_ellipsoid_update}}
\label{sec:rank-one-update}
Note that the volume of the ellipsoid determined by $A_t$ is proportional to $\detv{A_t^{-1}}$. Therefore, $A_t$ is the solution to the following optimization problem, where we use that the volume of the ellipsoid determined by $A_t$ is proportional to $\detv{A_t^{-1}}$.
\begin{align*}
    \max \detv{A_t} \text{ such that } A_t \preceq A_{t-1} \text{ and } \norm{A_tx_t} \le 1
\end{align*}
Additionally, since $\detv{AB} = \detv{A}\cdot\detv{B}$, we have that this objective is invariant under linear transformations. It thus follows that our objective can be rewritten as:
\begin{align*}
    &\max \detv{A_t} &\text{ such that } &A_t \preceq A_{t-1} \text{ and } \norm{A_tx_t} \le 1 \\
    \equiv &\max \detv{A_t \cdot A_{t-1}^{-1}} &\text{ such that } &A_t \cdot A_{t-1}^{-1} \preceq I \text{ and } \norm{\inparen{A_t\cdot A_{t-1}^{-1}}A_{t-1}x_t} \le 1 \\
    \equiv &\max \detv{\widehat{A}} &\text{ such that } &\widehat{A} \preceq I \text{ and } \norm{\widehat{A}A_{t-1}x_{t}} \le 1
\end{align*}
where the last line follows from using the intermediate variable $\widehat{A} = A_t\cdot A_{t-1}^{-1}$.

In other words, after the transformation, the problem is equivalent to finding the minimum volume ellipsoid that contains (i) the unit ball and (ii) point $A_{t-1}x_t$. Geometrically, it is clear what the optimal ellipsoid for this problem is: one of its semi-axes is $A_{t-1}x_t$; all others are orthogonal to $A_{t-1}x_t$ and have length 1 (this can be formally proved using symmetrization). However, we do not use this observation and derive a formula for $\widehat{A}$ using linear algebra.

We first give an upper bound on the objective value of the above optimization problem. Since $\widehat{A} \preceq I$, we have that all its singular values must be at most $1$. Additionally, since $1 \ge \norm{\widehat{A}A_{t-1}x_t} \ge \sigma_{\min}\inparen{\widehat{A}} \cdot \norm{A_{t-1}x_t}$, we have that at least one singular value of $\widehat{A}$ must be $\le \nfrac{1}{\norm{A_{t-1}x_t}}$. Putting everything together and using the fact that the determinant is the product of the singular values gives $\detv{\widehat{A}} \le \nfrac{1}{\norm{A_{t-1}x_t}}$.

We now show that there exists a setting of $\widehat{A}$ that achieves this upper bound. Let $v_1$ be a unit vector in the direction of $A_{t-1}x_t$ and $v_2, \dots, v_d$ complete the orthonormal basis for $\R^d$ from $v_1$, and write \(
    \widehat{A} = \frac{1}{\norm{A_{t-1}x_t}}v_1v_1^T + \sum_{i=2}^d v_iv_i^T
\). We will show that $\widehat{A}$ satisfies the constraints imposed by the optimization problem. Since we have $\norm{A_{t-1}x_t} \ge 1$ (as we impose that $x_t \notin \cE_{A_{t-1}}$), the fact that $\widehat{A} \preceq I$ follows immediately. For the second constraint, we write:
\begin{align*}
    \norm{\widehat{A}A_{t-1}x} = \norm{\inparen{\frac{1}{\norm{A_{t-1}x_t}}v_1v_1^T + \sum_{i=2}^d v_iv_i^T}A_{t-1}x_t} = \norm{\frac{A_{t-1}x_t}{\norm{A_{t-1}x_t}}} = 1
\end{align*}
Furthermore, it is easy to see that $\detv{\widehat{A}} = \nfrac{1}{\norm{A_{t-1}x_t}}$, which achieves our upper bound.

Finally, recall that we wrote $\widehat{A} = A_t\cdot A_{t-1}^{-1}$; rearranging this gives us what we want.

\section{Proofs from Section~\ref{section:revised_algorithm}}
\label{sec:moved_proofs}
For the purposes of our analysis, we will ``simulate'' the singular value correction step using the following procedure. Let $w_{1}, \dots, w_{d}$ be the the $i$-th column of $V_t$ (note that the $w_i$-s
are unit vectors that are the directions of the semi-axes of $\cE_t$).
Let $\cE_t'$ be the ellipsoid obtained prior to Line \ref{line:sv_correction}.
i.e., $\cE_t' = \inbraces{x\suchthat\norm{(\Sigma_t')^{-1}V_t^Tx}\leq 1}$.
We create ``ghost'' points $z_{1}, \dots, z_{d}$: $z_{i} = \tau_{i,t} w_i = \max([\Sigma'_{t}]_{i i}, 
\nfrac{M_t}{\xi}) w_i $ (see line \ref{line:sv_correction} in Algorithm \ref{alg:general_greedy_maxev}).
Note that $X$ contains the ball of radius $M_t/\xi$ centered at $0$, since the aspect ratio of $X$ is at most $\xi$. Thus, each point $z_i$ either lies in 
$X$ (if $\|z_i\| = \nfrac{M_t}{\xi}$) or 
in $\cE_t'$ (if $\|z_i\| = [\Sigma'_{t}]_{i i}$).
We finally start with matrix $A_{t-1}$ and consecutively apply the update rule from Algorithm~\ref{alg:greedy_maxev} for each of the points $x_t, z_{1}, \dots, z_{d}$. 

Next, we show Lemma~\ref{lemma:sv_update_step_simulation}, which states that this simulation of the singular value correction step yields $A_t$, the same matrix that we obtain when we perform the update rule from Algorithm~\ref{alg:general_greedy_maxev}.

\begin{lemma}
\label{lemma:sv_update_step_simulation}
The process described above yields matrix $A_t$.
\end{lemma}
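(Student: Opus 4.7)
The plan is to track the SVD representation of the matrix produced after each ghost-point update and show that it matches $A_t = U_t \Sigma_t^{-1} V_t^T$. Let $A_t' = U_t (\Sigma_t')^{-1} V_t^T$ denote the matrix obtained right after applying the update rule for $x_t$ to $A_{t-1}$ (i.e., before the singular-value correction step). By construction, the columns $w_1, \dots, w_d$ of $V_t$ are the directions of the semi-axes of $\cE_{A_t'}$ with respective lengths $[\Sigma_t']_{11}, \dots, [\Sigma_t']_{dd}$, and the ghost points are $z_i = \tau_{i,t} w_i$ with $\tau_{i,t} = \max([\Sigma_t']_{ii}, M_t/\xi) \geq [\Sigma_t']_{ii}$. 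I would apply the update rule of Claim~\ref{claim:minimum_volume_ellipsoid_update} for $z_1, z_2, \dots, z_d$ in order, starting from $A_t'$.

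The main step is the following single-update claim: if the current matrix has SVD $A = U \Sigma^{-1} V^T$ with $\sigma := \Sigma_{ii}$ and we apply the update rule of Claim~\ref{claim:minimum_volume_ellipsoid_update} with the point $z = \tau w_i$ for some $\tau \geq \sigma$, then the resulting matrix has SVD $U \widetilde{\Sigma}^{-1} V^T$ where $\widetilde{\Sigma}$ agrees with $\Sigma$ in every entry except the $i$-th, which is replaced by $\tau$. To prove this, observe that $A w_i = U \Sigma^{-1} e_i = u_i/\sigma$, where $u_i$ is the $i$-th column of $U$, so $A z = (\tau/\sigma) u_i$ and $\norm{A z} = \tau/\sigma \geq 1$. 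Substituting into \eqref{eq:A-hat} gives $\widehat{A} = I - (1 - \sigma/\tau) u_i u_i^T$, and a direct computation shows $\widehat{A} u_j = u_j$ for $j \neq i$ and $\widehat{A} u_i = (\sigma/\tau) u_i$. Therefore $\widehat{A} U$ equals $U$ times a diagonal matrix that scales the $i$-th column by $\sigma/\tau$, and multiplying by $\Sigma^{-1} V^T$ yields exactly $U \widetilde{\Sigma}^{-1} V^T$. The degenerate case $\tau = \sigma$ just gives $\widehat{A} = I$ and leaves the matrix unchanged.

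With this claim in hand, the lemma follows by induction on $i$: since each ghost-point update preserves both $U_t$ and $V_t$ and only replaces the $i$-th diagonal entry of the middle factor by $\tau_{i,t}$, the matrix produced after processing $z_1, \dots, z_d$ is $U_t \Sigma_t^{-1} V_t^T = A_t$. The main obstacle is the SVD bookkeeping in the single-update claim — specifically, verifying that the rank-one correction matrix $\widehat{A}$ acts diagonally on the basis $u_1, \dots, u_d$ so that $U_t$ is preserved — but once $A z$ is computed this reduces to a short linear-algebraic calculation.
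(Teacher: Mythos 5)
Your proposal is correct and follows essentially the same route as the paper: reduce to a single ghost-point update, compute $Az_i = (\tau_{i,t}/\Sigma_{ii})u_i$, substitute into the rank-one update formula \eqref{eq:A-hat} to see that $\widehat{A}$ is diagonal in the $U$-basis, and conclude by iterating over $i$. The only cosmetic difference is that you track the columns $u_i$ directly while the paper writes $Ue_i e_i^T U^T$, and you make the induction over the ghost points explicit where the paper leaves it implicit.
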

\begin{proof}
Consider the executions of Algorithms \ref{alg:greedy_maxev} and \ref{alg:general_greedy_maxev}.
After Algorithm~\ref{alg:greedy_maxev} processes point $x_t$ and Algorithm \ref{alg:general_greedy_maxev} executes Line \ref{line:svd_update}, both algorithms are in the same state. Namely they store matrix $A_{t}'$ given by Claim~\ref{claim:minimum_volume_ellipsoid_update}, 
\begin{align*}
    A_t' = A_{t-1} - \inparen{1 - \frac{1}{\norm{A_{t-1}x_t}}}\inparen{\frac{\inparen{A_{t-1}x_t}\inparen{A_{t-1}x_t}^T}{\norm{A_{t-1}x_t}^2}}A_{t-1}
\end{align*}
It remains to show that executing Line \ref{line:sv_correction} in Algorithm \ref{alg:general_greedy_maxev} is equivalent to injecting these ``ghost'' points $z_1,\dots,z_d$ into Algorithm \ref{alg:greedy_maxev}.

It is sufficient to consider the effect of injecting one point $z_i$. Assume we started with matrix $A$ and obtained matrix $A'$ by injecting $z_i$. Observe that if $\sigma_i(A) \leq \nfrac{1}{\|z_i\|}$ (that is, $z_i$ lies in the ellipsoid defined by $A$; in particular, if $z_i \in \cE_t'$), then $A'=A$.
We prove that if $\sigma_i(A) > \nfrac{1}{\|z_i\|}$, then $A' = U{\Sigma'}^{-1}V^T$ where $A = U\Sigma^{-1}V^T$, entry $\Sigma_{ii}' = \|z_i\|$, and all other entries of $\Sigma'$ are equal to the corresponding entries of $\Sigma$. We have,
$$
    A' = \widehat{A} A = \left(I - \inparen{1 - \frac{1}{\norm{A z_i}}}\inparen{\frac{\inparen{Az_i}\inparen{Az_i}^T}{\norm{Az_i}^2}}\right)A 
$$
Since $z_i = \tau_{i,t} w_{i}$, we have $V^T z_i = \|z_i\| e_i = \tau_{i,t} e_i$. Accordingly, 
$Az_i = U\Sigma^{-1} \tau_{i,t} e_i =  \frac{\tau_{i,t}}{\Sigma_{ii}} Ue_i$.
Thus, $\widehat{A} = I - (1 - \frac{\Sigma_{ii}}{\tau_{i,t}}) Ue_i e_i^T U^T$.
Since $U$ is an orthogonal matrix, $UU^T = I$ and thus 
$$A'=\left(I - \inparen{1 - \frac{\Sigma_{ii}}{\tau_{i,t}}} Ue_i e_i^T U^T\right)U\Sigma^{-1}V^T = 
U \biggl(\underbrace{\Sigma^{-1} \left(I-\inparen{1 - \frac{\Sigma_{ii}}{\tau_{i,t}}} e_ie_i^T\right)}_{\Sigma'^{-1}}\biggr)V^T.$$
Note that $I-\bigl(1 - \frac{\Sigma_{ii}}{\tau_{i,t}}\bigr) e_ie_i^T$ is a diagonal matrix; all of its diagonal entries are equal to 1 except for the $i$-th diagonal entry, which is $\nfrac{\Sigma_{ii}}{\tau_{i,t}}$.
We get that $\Sigma'^{-1}$ differs from $\Sigma^{-1}$ only in the $i$-th diagonal entry: $(\Sigma'^{-1})_{ii} = \nfrac{1}{\tau_{it}}$, as required.
\end{proof}

\begin{proof}[Proof of Lemma~\ref{lemma:invariants}]
By Lemma~\ref{lemma:sv_update_step_simulation}, we can break the evolution of our potential functions into two main steps: that after Algorithm~\ref{alg:greedy_maxev} gets $x_t$ and that after Algorithm~\ref{alg:greedy_maxev} gets the ghost points.
Let $S_t'$, $P_t'$, $Q_t$ represent $\fnorm{\astar\cdot (A_t')^{-1}}^2$, $2\log\detv{(A_t')^{-1}}$, and $\max_{i\in[t]} \norm{\astar x_i}^2$, respectively, where $A_t'$ is the matrix defined in Lemma~\ref{lemma:sv_update_step_simulation}.

By Lemma \ref{lemma:invariant_basic_stronger}, for all $t$, we have $S_t' - S_{t-1} \le (P_t' - P_{t-1}) \cdot Q_t$. Similarly, as \(\|\astar x_t\| \leq 1\) and \(\|A_{t-1} x_t\| > 1\), we have:
\begin{align*}
    S_t' - S_{t-1} = \inparen{1 - \frac{1}{\norm{A_{t-1}x_t}^2}}\norm{\astar x_t}^2 \le \inparen{1 - \frac{1}{\norm{A_{t-1}x_t}^2}}Q_t \le Q_t
\end{align*}
We now analyze the singular value correction step. By Lemma~\ref{lemma:sv_update_step_simulation},it can be simulated by adding the ghost points $z_1,\dots,z_d$.
First, observe that we only need to analyze points $z_i$ with  $z_i = \nfrac{M_t}{\xi} w_{i}$ (because other points are in $\cE_t'$ and do not cause any update). We now show that $\norm{\astar z_i} \le \max_{i \in [t]} \norm{\astar x_t} = \sqrt{Q_t}$. We have:
\begin{align*}
    \norm{\astar z_i} = \frac{M_t}{\xi} \cdot \norm{\astar w_i} \le M_t \cdot \frac{\sigma_{\max}\inparen{\astar}}{\xi} \le \sigma_{\min}\inparen{\astar} \cdot M_t
\end{align*}
Let $p \coloneqq \argmax_{i \in [t]} \norm{x_i}$. Then:
\begin{align*}
    \norm{\astar z_i} \le \sigma_{\min}\inparen{\astar} \cdot M_t = \sigma_{\min}\inparen{\astar} \cdot \norm{x_p} \le \norm{\astar x_p} \le \max_{i \in [t]} \norm{\astar x_t} = \sqrt{Q_t}.
\end{align*}
Consider the state of Algorithm~1 in the simulation after it gets points $z_1,\dots,z_i$.
Let $A_{t,i}'$ be the resulting state matrix, and $S_{t,i}'$ and $P_{t,i}'$ be the values of functions $S$ and $P$. Observe that $S_{t,0}' = S_{t}'$ and $P_{t,0}' = P_{t}'$ and that $S_t = S_{t,d}'$ and $P_t = P_{t,d}'$. We now invoke Lemma~\ref{lemma:invariant_basic_stronger} repeatedly:
\begin{align*}
    S_{t,0}' - S_{t-1} &\le (P_{t,0}' - P_{t-1}) \cdot Q_t \\
    S_{t,1}' - S_{t,0}' &\le (P_{t,1}' - P_{t,0}') \cdot Q_t \\
    &\vdots \\
    S_{t,i}' - S_{t,i-1}' &\le \inparen{P_{t,i}' - P_{t,i-1}'} \cdot Q_t \\
    &\vdots \\
    S_{t} - S_{t,d-1}' &\le \inparen{P_{t} - P_{t,d-1}'} \cdot Q_t
\end{align*}
Adding all these inequalities yields $S_{t}-S_{t-1} \le \inparen{P_t-P_{t-1}}\cdot Q_t$. Since numbers $Q_t$ are non-decreasing, this implies that for all $u \le t$, we have $S_t \leq S_u + Q_t \cdot (P_t - P_u)$.

Similarly, we repeatedly write:
\begin{align*}
    S_{t,0}' - S_{t-1} &\le Q_t \\
    S_{t,1}' - S_{t,0}' &\le Q_t \\
    &\vdots \\
    S_{t,i}' - S_{t,i-1}' &\le Q_t \\
    &\vdots \\
    S_{t} - S_{t,d-1}' &\le Q_t
\end{align*}
Note that at least one of the semi-axes of $\cE_t'$ has length at least $M_t$
(since all points $x_1,\dots, x_t$ are in $\cE_t'$). That is, $[\Sigma']_{ii} \geq M_t \geq M_t/\xi_i$ for some point $z_i$.
This means that we do not perform any updates for $z_i$ and $S_{t,i}' = S_{t,i-1}'$. Summing up the inequalities above and taking into account that $S_{t,i}' = S_{t,i-1}'$ for at least one $i$ yields $ S_t \leq S_{t-1} + d \cdot Q_t$, as required.
\end{proof}

\begin{proof}[Proof of Lemma~\ref{lemma:s_initial_value}]
Consider ellipsoid $\cE_1$. It has semi-axes $\tilde w_{1}=x_1$ and some $\tilde w_{2},\dots, \tilde w_{d}$. Since $\|\tilde w_{i}\| = \nfrac{\|x_1\|}{\xi}$ for $i \geq 2$ (see step 5 of Algorithm~\ref{alg:general_greedy_maxev}), all points $\tilde w_{i}$ lie inside $X\subset \cE^{*}$. In particular, $\|\astar \tilde w_{i}\| \leq 1$. Finally, note that $A_1 \tilde w_{i} = e_i$.
We have:
\begin{align*}
   S_1 = \|\astar A_1^{-1}\|_F = \sum_{i=1}^d \|\astar A_1^{-1} e_i\|^2 = 
\sum_{i=1}^d \|\astar \tilde w_i\|^2 \leq d 
\end{align*}
\end{proof}

\begin{proof}[Proof of Lemma~\ref{lem:bound-on-S_n}]
Set $t_0 = n$, and define $t_i \in \{1, \ldots, n\}$ for $i \geq 1$ recursively as follows:
\begin{enumerate}
    \item Set \(t_i = \max \{j \in \{1, \ldots, n\} \colon Q_j < \nfrac{Q_{t_{i-1}}}{e} \}\)
    \item If there is no such \(j\), then finish.
\end{enumerate}
Let \(t_0 > \ldots > t_m\) be the indices defined by this process. Observe that for each \(0 \leq i \leq m\), we have \(Q_{t_i} \leq e^{-i}\). Further, for each \(0 \leq i \leq m -1\), we have \(\nfrac{Q_{t_i}}{Q_{t_{i+1} +1}} \leq e\), and \(\nfrac{Q_{t_m}}{Q_1} \leq e\).

Now we apply bound (\ref{eqn:multi_step_p_inv}). For every \(0 \leq i \leq m - 1\), we have
$$S_{t_i} \leq S_{t_{i+1} + 1} + Q_{t_i} (P_{t_i} - P_{t_{i+1} + 1})$$
From bound (\ref{eqn:single_step_inv}), we get 
$$S_{t_{i+1} + 1} \leq S_{t_{i+1}} + d \cdot Q_{t_{i+1} + 1}.$$
Combining these equations, we obtain for each \(0 \leq i \leq m - 1\):
\begin{equation}
\label{eqn:combined_inv}
S_{t_i} - S_{t_{i+1}} \leq d \cdot Q_{t_{i+1} + 1} + Q_{t_i} \cdot (P_{t_i} - P_{t_{i+1} + 1})
\end{equation}
Now we add up inequalities (\ref{eqn:combined_inv}) for all \(0 \leq i \leq m-1\) and get:
\[S_{t_m} - S_{t_0} = \sum_{i=0}^{m-1} \left(S_{t_i} - S_{t_{i+1}}\right) \leq d \cdot \sum_{i=0}^{m-1} Q_{t_{i+1} + 1} + \sum_{i=0}^{m-1} Q_{t_{i}} \cdot P_{t_i} + \sum_{i=0}^{m-1} Q_{t_{i}} \cdot \left(- P_{t_{i+1} + 1}\right)\]
By (\ref{eqn:multi_step_p_inv}) we have \(S_{t_m} - S_1 \leq Q_{t_m} \cdot (P_{t_m} - P_1)\). Combining this with the previous inequality, we get
\begin{equation}
\label{eqn:final_bound}
S_{n} - S_1 \leq d \cdot \underbrace{\sum_{i=0}^{m-1} Q_{t_{i+1} + 1}}_{\text{I}} + \underbrace{\sum_{i=0}^{m} Q_{t_{i}} \cdot P_{t_i}}_{\text{II}} + \underbrace{\left(-Q_{t_m} \cdot P_1\right) + \sum_{i=0}^{m-1} Q_{t_{i}} \cdot \left(- P_{t_{i+1} + 1}\right)}_{\text{III}}
\end{equation}
Now, we bound each individual term in (\ref{eqn:final_bound}).
For I, observe that as \(t_{i+1} + 1 \leq t_i\), we have \(Q_{t_{i+1}+1} \leq Q_{t_i} \leq e^{-i}\). Thus
\[\sum_{i=0}^{m-1} Q_{t_{i+1} + 1} \leq \sum_{i=0}^{m-1} e^{-i} \leq \sum_{i=0}^\infty e^{-i} = \frac{e}{e-1}\]
For II, we have \(P_{t_i} \leq P_{t_0}\) for all \(i\), therefore
\[\sum_{i=0}^m Q_{t_i} P_{t_i} \leq P_{t_0} \cdot \sum_{i=0}^m Q_{t_i} \leq P_{t_0} \cdot \frac{e}{e-1}\]

Now we bound III. First, we prove the following lemma to bound terms of the form \(- Q_t \cdot P_u\):
\mybox{
\begin{lemma}
\label{lemma:prod_qp_bound}
Consider indices \(u \leq t\). If \(\nfrac{Q_{t}}{Q_u} \leq e\), then
\[- Q_t \cdot P_u \leq Q_t \cdot \left(d + 4d \log \xi + d \log\left(\frac{1}{Q_t}\right)\right)\]
\end{lemma}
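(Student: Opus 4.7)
Since $Q_t > 0$, it suffices to prove the equivalent lower bound $P_u \geq d\log Q_t - d - 4d\log\xi$; multiplying by $Q_t$ and rearranging then gives the claim. I would obtain this lower bound by splitting $P_u$ into a ``$J^*$ part'' and an ``$A_u$ part'' and bounding each separately. Write $\tau_i^{(u)} = 1/\sigma_i(A_u)$ for the semi-axes of $\mathcal{E}_u$. Using $|\det(J^* A_u^{-1})| = \prod_i \sigma_i(J^*)\cdot\prod_i \tau_i^{(u)}$, we have
$$P_u \;=\; 2\sum_{i=1}^d \log \sigma_i(J^*) \;+\; 2\sum_{i=1}^d \log \tau_i^{(u)}.$$

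For the semi-axis term, I would invoke the singular-value correction step on line~\ref{line:sv_correction} of Algorithm~\ref{alg:general_greedy_maxev}, which enforces $\tau_i^{(u)} \geq M_u/\xi$ for every $i$, giving $2\sum_i \log \tau_i^{(u)} \geq 2d\log(M_u/\xi)$. For the singular values of $J^*$, I would combine $\sigma_{\min}(J^*) \geq \sigma_{\max}(J^*)/\xi$ (from $\kappa(\mathcal{E}^*) \leq \xi$) with a lower bound on $\sigma_{\max}(J^*)$ derived from $Q_u$: picking an index $j \leq u$ attaining $\|J^* x_j\|^2 = Q_u$, we get $\sqrt{Q_u} = \|J^* x_j\| \leq \sigma_{\max}(J^*)\|x_j\| \leq \sigma_{\max}(J^*) M_u$. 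Together these yield $\sigma_{\min}(J^*) \geq \sqrt{Q_u}/(M_u\xi)$, and hence $2\sum_i \log \sigma_i(J^*) \geq 2d\log(\sqrt{Q_u}/(M_u \xi))$.

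The crucial algebraic observation is that summing the two bounds causes the unknown $M_u$ to drop out entirely:
$$P_u \;\geq\; 2d\log\bigl(\sqrt{Q_u}/(M_u\xi)\bigr) + 2d\log(M_u/\xi) \;=\; d\log Q_u - 4d\log\xi.$$
Finally, I would invoke the hypothesis $Q_t/Q_u \leq e$, which gives $\log Q_u \geq \log Q_t - 1$ and so $P_u \geq d\log Q_t - d - 4d\log\xi$. This is exactly what we needed.

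The main obstacle is engineering the two lower bounds so that the $M_u$ factors cancel, which forces the factor $4d$ in front of $\log\xi$: one factor of $\xi$ comes from the correction step and another from the aspect-ratio bound $\sigma_{\min}(J^*) \geq \sigma_{\max}(J^*)/\xi$. A secondary point I would verify carefully is that the invariant $\tau_i^{(u)} \geq M_u/\xi$ is valid at time $u$ for the notion of $M_u$ used in the analysis, even when the preceding iteration took the ``else'' branch and did not re-enter the correction step.
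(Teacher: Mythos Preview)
Your proof is correct and follows essentially the same approach as the paper. Both arguments combine the correction-step invariant $\sigma_{\min}(A_u^{-1}) \geq M_u/\xi$, the aspect-ratio bound $\sigma_{\min}(J^*) \geq \sigma_{\max}(J^*)/\xi$, and the inequality $\sigma_{\max}(J^*) \geq \sqrt{Q_u}/M_u$ to arrive at $P_u \geq d\log Q_u - 4d\log\xi$, then apply $Q_t/Q_u \leq e$; the only cosmetic difference is that the paper bounds $\sigma_{\min}(J^* A_u^{-1})$ directly via $\sigma_{\min}(J^*)\,\sigma_{\min}(A_u^{-1})$, whereas you split $P_u = 2\log\det J^* + 2\log\det A_u^{-1}$ first and let the $M_u$ factors cancel when summing.
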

}
\begin{proof}
By definition, we have \(-P_u = - \sum_{i=1}^d \log(\sigma^2_{i, u})\). Next, notice that for all $t\in\{1,\dots,u\}$:
\begin{align*}
    \sigma_{\min}\inparen{\astar \cdot A_{u}^{-1}} &\ge \sigma_{\min}\inparen{\astar} \cdot \sigma_{\min}\inparen{A_{u}^{-1}} \\
    &\ge \sigma_{\min}\inparen{\astar} \cdot \frac{\norm{x_t}}{\xi}\quad\quad\text{since the correction step ensures that } \sigma_{\min}\inparen{A_u^{-1}} \ge \frac{\norm{x_t}}{\xi} \\
    &\ge \frac{\sigma_{\max}\inparen{\astar}}{\xi}\cdot \frac{\norm{x_t}}{\xi} \\
    &\ge \frac{\norm{\astar x_t}}{\xi^2}
\end{align*}
Since this is true for all $t \in [u]$, we can maximize the RHS over $t \in [u]$, and we obtain \(\sigma_{i, u}^2 \geq \nfrac{Q_u}{\xi^4}\). Hence, \(-P_u \leq d \logv{\nfrac{1}{Q_u}} + 4d \log \xi\). As \(\nfrac{1}{Q_u} \leq \nfrac{e}{Q_t}\), we get \(-P_u \leq d + 4d \log \xi + d \logv{\nfrac{1}{Q_t}}\). Multiplying by \(Q_t\) gives the claim.
\end{proof}

Applying Lemma \ref{lemma:prod_qp_bound} to each term in III, we obtain
\[-Q_{t_m} \cdot P_1 + \sum_{i=0}^{m-1} Q_{t_{i}} \cdot - P_{t_{i+1} + 1} \leq (d + 4d \log \xi) \cdot \sum_{i=0}^m Q_{t_i} + d \cdot \sum_{i=0}^m Q_{t_i} \log\left(\frac{1}{Q_{t_i}}\right)\]
As before, we can bound the first term with \(\sum_{i=0}^m Q_{t_i} \leq \frac{e}{e-1}\).
For the second term, observe that \(y \mapsto y \log(\nfrac{1}{y})\) is increasing on \(\insquare{0,\nfrac{1}{e}}\). Thus, \(Q_{t_i} \log(\nfrac{1}{Q_{t_i}}) \leq e^{-i} \log(\nfrac{1}{e^{-i}}) = e^{-i} \cdot i\) for \(i \geq 1\). The maximum of \(y \logv{\nfrac{1}{y}})\) is \(\nfrac{1}{e}\); therefore \(Q_{t_0} \log(\nfrac{1}{Q_{t_0}}) \leq \nfrac{1}{e}\).
Thus we can bound the second term by 
$$d \cdot \left(\frac{1}{e} + \sum_{i=1}^{m} i \cdot e^{-i}\right) \leq d \cdot \left(\frac{1}{e} + \sum_{i=1}^{\infty} i \cdot e^{-i}\right) = d \cdot \inparen{\frac{1}{e} + \frac{e}{(1-e)^2}}.$$
To summarize, we bound term III by
\[d \cdot \left(\frac{1}{e} + \frac{e}{e-1} + \frac{e}{(1-e)^2}\right) + 4d \log \xi \cdot \frac{e}{e-1}\]

Combining the bounds on the terms of (\ref{eqn:final_bound}) results in:
\[
    S_n - S_1 \leq d \cdot \left(\frac{1}{e} + \frac{2 \cdot e}{e-1} + \frac{e}{(1-e)^2}\right) + P_{t_0} \cdot \frac{e}{e-1} + 2d \log \xi \cdot \frac{e}{e-1}
\]
Rearranging and applying \Cref{lemma:s_initial_value}, we obtain
\begin{equation}
\label{eqn:final_sn_pn}
    S_n -  P_{n} \cdot \frac{e}{e-1} \leq  d \cdot \left(1 + \frac{1}{e} + \frac{2 \cdot e}{e-1} + \frac{e}{(1-e)^2}\right) + 2d \log \xi \cdot \frac{e}{e-1}
\end{equation}
Observe that
\(S_n - P_n \cdot \frac{e}{e-1} = \sum_{i=1}^d (\sigma^2_{i, n} - \frac{e}{e-1} \log(\sigma^2_{i, n}))\). As \(y - \frac{e}{e-1}\log y \geq \frac{e-2}{e-1}\, y\) for all \(y > 0\), we get
\[
\sum_{i=1}^d \sigma^2_{i, n} \leq \frac{e-1}{e-2} \left(S_n - P_n \cdot \frac{e}{e-1}\right)
\]

Using (\ref{eqn:final_sn_pn}) and replacing constants with their integer ceilings, we finish with
\[\sum_{i=1}^d \sigma^2_{i,n} \leq \magicnumber{\xi}\]
\end{proof}
\end{document}